\documentclass[11pt]{article}  

\usepackage{graphicx}
\usepackage[all]{xy}
\usepackage{latexsym}
\usepackage[english]{babel}
\usepackage{amsmath}
\usepackage{amsfonts}
\usepackage{amssymb}
\usepackage{makeidx}
\usepackage[utf8]{inputenc}
\usepackage{lmodern}
\usepackage{verbatim}
\usepackage{cite} 
\usepackage[pagewise]{lineno}
\usepackage{amsthm}
\usepackage{tikz-cd} 


\newtheorem{theorem}{Theorem}[section]

\newtheorem{lemma}[theorem]{Lemma}
\newtheorem{proposition}[theorem]{Proposition}

\theoremstyle{definition}
\newtheorem{definition}[theorem]{Definition}
\newtheorem{remark}[theorem]{Remark}
\newtheorem{example}[theorem]{Example}

\newcommand{\R}{\ensuremath{\mathbb{R}}}

\newcommand{\D}{\ensuremath{\mathcal{D}}}
\newcommand{\Es}{\ensuremath{\mathbb{S}}}

\newcommand{\M}{\ensuremath{\mathcal{M}}}

\DeclareMathOperator{\arcsinh}{arcsinh}

\usepackage[square, authoryear]{natbib}
\bibliographystyle{plainnat}

\numberwithin{equation}{section}

\begin{document}
	\title{Radial kinetic nonholonomic trajectories are Riemannian geodesics!}
	\author{
		{\bf\large Alexandre Anahory Simoes}\hspace{2mm}
		\vspace{1mm}\\
		{\small  Instituto de Ciencias Matem\'aticas (CSIC-UAM-UC3M-UCM)} \\
		{\small C/Nicol\'as Cabrera 13-15, 28049 Madrid, Spain}\\
		{\it\small e-mail: \texttt{alexandre.anahory@icmat.es }}\\
		\vspace{2mm}\\
		{\bf\large Juan Carlos Marrero}\hspace{2mm}
		\vspace{1mm}\\
		{\it\small 	ULL-CSIC Geometr{\'\i}a Diferencial y Mec\'anica Geom\'etrica,}\\
		{\it\small  {Departamento de Matem\'aticas, Estad{\'\i}stica e I O, }}\\
		{\it\small  {Secci\'on de Matem\'aticas, Facultad de Ciencias}}\\
		{\it\small  Universidad de la Laguna, La Laguna, Tenerife, Canary Islands, Spain}\\
		{\it\small e-mail: \texttt{{jcmarrer@ull.edu.es}} }\\
		\vspace{2mm}\\
		{\bf\large David Martín de Diego}\hspace{2mm}
		\vspace{1mm}\\
		{\small  Instituto de Ciencias Matem\'aticas (CSIC-UAM-UC3M-UCM)} \\
		{\small C/Nicol\'as Cabrera 13-15, 28049 Madrid, Spain}\\
		{\it\small e-mail:  \texttt{david.martin@icmat.es} }\\		
	}

	\date{}
	
	\maketitle
	
	\vspace{0.5cm}
	\begin{abstract}
		Nonholonomic mechanics describes the motion of systems constrai\-ned by nonintegrable constraints. One of  its most remarkable properties is that the derivation of the nonholonomic equations is not variational in nature. {However, in} this paper, we prove (Theorem 1.1) that for kinetic nonholonomic {systems}, the solutions starting from a fixed  point $q$ are  true geodesics for a family of Riemannian metrics on the image submanifold ${\mathcal M}^{nh}_q$  of the nonholonomic exponential map. {This implies a surprising result: the kinetic nonholonomic trajectories with starting point $q$, for sufficiently small times, minimize length in ${\mathcal M}^{nh}_q$!}  
	\end{abstract}
	
	\let\thefootnote\relax\footnote{\noindent AMS {\it Mathematics Subject Classification ({2020})}. Primary 70G45; Secondary  53B20, 53C21,
		37J60, 70F25\\
		\noindent Keywords: nonholonomic mechanics, nonholonomic geodesics, Riemannian geometry, exponential map, {radial geodesics}
	}

\section{Introduction}
Nonholonomic systems are mechanical systems that limit their admissible motions imposing constraints on the velocities (see \citep*{Neimark, Bloch, Cortes, Pacific} and references therein). Typically,  nonholonomic constraints are imposed by prescribing a nonintegrable distribution on the configuration space. This is, for instance, a typical situation in rolling motion without slipping. One of the aspects that makes nonholonomic dynamics very appealing is that the equations are derived using a non-variational principle, the Lagrange-d'Alembert principle, contrary to Hamilton’s principle in Lagrangian mechanics (see \citep*{LM94, CdLMdDM2003, AKN2006, Lewis2000} for a comparison with variational systems subjected to nonintegrable constraints). Nonetheless, it is known that physical systems subjected to nonholonomic constraints move according to this principle. Thus, the lack of a variational theory, makes it more important to understand the structure of the nonholonomic equations of motion and isolate all the underlying geometric objects which govern this kind of systems, which requires the use of several  differential geometry techniques, in particular, Riemannian geometry {(see, for instance,  \citep*{Synge28, vranceanu, Lewis98,AMM2, GaMa} and the references therein).} 

\medskip

\noindent {\bf Kinetic nonholonomic systems and Riemannian geometry.} In this paper, we will continue to deepen {the research program} of studying nonholonomic mechanics in terms of Riemannian geometry. In this sense, a fundamental concept in Riemannian geometry is, without any doubt, that of a geodesic. One of its main properties is that geodesics minimize the length among curves connecting nearby points. Conversely, any curve minimizing length is necessarily a geodesic. To prove this key result in Riemannian geometry, it is  necessary to introduce different concepts and results as, for instance, the notion of geodesic flow,  exponential map, Gauss' lemma, among others (see \citep*{LeeR,O'Neill, docarmo}). From another point of view, Riemannian geodesics are the trajectories of a Lagrangian mechanical system of kinetic type (\citep*{AM78}). Formally, if $(Q, g)$ is a Riemannian manifold, then the geodesics are precisely the solutions of a mechanical system defined by the kinetic Lagrangian
\[
L_{g}(v_q)=\frac{1}{2}g(q)(v_q, v_q)\; ,\qquad v_q\in T_q Q.
\]
In other words, if we denote by $\nabla^g$ the Levi-Civita connection associated to $g$, then the {solutions of the} Euler-Lagrange equations for $L$ are exactly the curves $c$ which satisfy the geodesic equations, that is, 
\begin{equation}\label{geoeq}
\nabla^g_{\dot{c}(t)}\dot{c}(t)=0.
\end{equation}
However, if we introduce a distribution $\D$ on $Q$ and impose the condition that curves $c$ must have their tangent vectors lying in $\D$, that is, $\dot{c}(t)\in \D_{c(t)}$ for all $t$, the  picture becomes much more complex. In the sequel, we will denote by a triple $(Q, g, \D)$, a kinetic nonholonomic system. The Lagrange-d'Alembert principle states that now the solutions are those curves $c$ which satisfy the following equation (see Section \ref{section2}):
\begin{equation}\label{cnh}
	\nabla^g_{\dot{c}(t)}\dot{c}(t)\in \D^\bot_{c(t)}\; ,\qquad   \dot{c}(t)\in \D_{c(t)}
\end{equation}
where  $\D^{\bot}$  is the $g$-orthogonal complement to $\D$. Equivalently, we can describe the nonholonomic {trajectories} as the geodesics of an affine connection $\nabla^{nh}$ on $Q$ with initial condition satisfying the  nonholonomic constraints (see Equation \ref{nhconnection}). Of course, analogous to what happens with unconstrained geodesics, a curve $c$ is a solution of \eqref{cnh} if and only if it is the solution of Lagrange-d'Alembert equations for the Lagrangian $L_{g}$. Only in very exceptional cases the connection $\nabla^{nh}$ is the Levi-Civita connection for a Riemannian metric, in particular, this would imply that the distribution was integrable (see \citep*{Lewis98}).

\medskip

\noindent {\bf Subriemannian geometry and kinetic nonholonomic systems.} 
In general,  the equations of motion of nonholonomic systems derived using Lagrange-d'Alembert principle are not described using variational principles (see \citep*{CdLMdDM2003,Lewis2000}) and, in particular,  as the geodesic equations for a Riemannian metric. 
This ``non-variational" derivation of the equations is the reason for the different qualitative behaviour of nonholonomic mechanics in comparison with standard variational problems (non-preservation, in general, of a symplectic form, neither of Poisson structures, non validity of Noether's theorem...). This is why it is traditionally thought that nonholonomic  mechanics is a very distinguished case that does not share many fundamental properties of the remaining  mechanical systems derived using purely variational techniques (see \citep*{BKMM96,Bloch,Neimark,cendra}). In fact, given a triple $(Q, g, \D)$ we have two main different geometries associated to it
\begin{enumerate}
	\item the {\bf subriemannian geometry}, related with the purely variational procedure. In this case, one is  interested  in {\bf shortest paths} where we measure distance restricting to curves satisfying the constraints  $\dot{c}(t)\in \D_{c(t)}$ (see \citep*{Montgomery}).
	\item the {\bf nonholonomic geometry} described by the solutions of nonholonomic geodesic equations (\ref{cnh}). In this case  one is interested in {\bf straightest paths} following the  Gauss’ least constraint principle (\citep*{Hertz, Neimark}).
	\end{enumerate}
Thus, solutions of (\ref{cnh}) are geodesics of the nonholonomic connection $\nabla^{nh}$ but they are not usually related with the problem of finding the curve that minimizes length between two points. 

\medskip

\noindent {\bf The main result of the paper.}
{Despite our previous comments, in} this article we present the  surprising result that we can characterize  {\bf radial nonholonomic geodesics, i.e.  nonholonomic solutions starting from a given point $q\in Q$, as true Riemannian  geodesics} for  a family of Riemannian  metrics $g_{q}^{nh}$ defined in the image  ${\mathcal M}^{nh}_{q}$ of the  nonholonomic exponential map {at $q$}. In other words, Theorem \ref{main-theorem} { below, the main result of this paper,} shows that {\bf radial kinetic nonholonomic trajectories are length minimizing} in a specified Riemannian manifold, that is, nonholonomic trajectories minimize the functional
\begin{equation*}
	L^{nh}_{q}(c)=\int_{0}^{1}\|\dot{c} \|_{g^{nh}_{q}} \ dt
\end{equation*}
among all curves $c:[0,1]\rightarrow \M^{nh}_{q}$ with fixed endpoints and starting at $q$. As a consequence,  they are Riemannian geodesics. Perhaps more importantly,  Theorem \ref{main-theorem} opens the door to new developments in nonholonomic mechanics using  Riemannian  geometry techniques (Jacobi fields, {global minimizing properties of nonholonomic trajectories},  construction of variational integrators for nonholonomic mechanics, Hamiltonization or {Lagrangianization} of nonholonomic systems...). In Section \ref{conclusions} we explore with detail some of these {new} possible lines of research in nonholonomic mechanics.

To formulate Theorem  \ref{main-theorem}  we need the following canonical identification. 
For an open subset ${\mathcal U}$ of a real vector space $E$ and for $v \in {\mathcal U}$, we will use the canonical linear identification $^{\bf v}_{v}$ between $E$ and $T_{v}{\mathcal U}$ given by
\begin{equation}\label{vertical-lift-v}
w^{\bf v}_{v} = \frac{d}{dt}_{|t = 0}(v + tw), \; \; \mbox{ for } w \in E.
\end{equation}

\begin{theorem}\label{main-theorem}
	Let $(Q, g,\D)$ be a kinetic nonholonomic system and $q$ a fixed point in $Q$.Then:
	\begin{enumerate}
		\item[i)]
		There exists a submanifold ${\mathcal M}^{nh}_{q}$ of $Q$, with  $q\in {\mathcal M}^{nh}_{q}$, and a diffeomorphism
		$\emph{exp}^{nh}_{q}: {\mathcal U}_0 \subseteq \D_{q} \to {\mathcal M}^{nh}_{q} \subseteq Q$,
		{where $\, {\mathcal U}_0$ is a starshaped open subset of $\D_{q}$ about $0_{q}\in {\mathcal U}_{0}$} and $\emph{exp}^{nh}_{q}(0_{q}) = q$. The map $\emph{exp}^{nh}_{q}$ is the nonholonomic exponential map at $q$. Moreover, we have that:
		\begin{enumerate}
		\item
		Under the canonical linear identification between $\D_{q}$ and $T_{0_{q}}{\mathcal U}_0$, the linear monomorphism
		\[
		T_{0_{q}}\emph{exp}^{nh}_{q}: T_{0_{q}}{\mathcal U}_0 \simeq \D_{q} \to T_{q}Q
		\]
		is just the canonical inclusion of $\D_{q}$ in $T_{q}Q$.
		\item
		For every $v_{q} \in {\mathcal U}_{0}$,
		\begin{equation}\label{radial:standard:form}
			\emph{exp}^{nh}_{q}(tv_{q}) = c_{v_{q}}(t), \; \; t \in [0, 1],
		\end{equation}
		with $c_{v_{q}}: [0, 1] \to {\mathcal M}^{nh}_{q} \subseteq Q$ the (unique) nonholonomic trajectory satisfying
		$c_{v_{q}}(0) = q, \dot{c}_{v_{q}}(0) = v_{q}$.
		\end{enumerate}   
		\item[ii)] All the radial kinetic nonholonomic trajectories departing from the fixed point $q\in Q$ are {homothetic reparametrizations} of nonholonomic trajectories {given by equation \eqref{radial:standard:form}}. In addition, they are minimizing geodesics for a Riemannian metric $g_{q}^{nh}$ on ${\mathcal M}^{nh}_{q}$ if and if only if the Riemannian metric ${\mathcal G}_0 = (\emph{exp}^{nh}_{q})^*(g_{q}^{nh})$ on ${\mathcal U}_0$ satisfies the Gauss condition, that is,
		\[
		{\mathcal G}_{0}(v_{q})(v_{q}, w_{q}) = {\mathcal G}_{0}(0_{q})(v_{q}, w_{q}), \; \; \mbox{ for } v_{q} \in {\mathcal U}_0 \mbox{ and } w_{q} \in D_{q}.
		\]
		\item[iii)] Such Riemannian metrics on ${\mathcal M}^{nh}_{q}$ always exist and if $g^{nh}_q$ is one of them then the Riemannian exponential associated with $g^{nh}_q$ at $q$ is just $\emph{exp}^{nh}_q$.		
		\end{enumerate}	
\end{theorem}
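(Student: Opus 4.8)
The plan is to realize the family of nonholonomic trajectories issuing from $q$ as the image of a genuine ``exponential'' map and then to transplant the classical Gauss-lemma argument to this setting. First I would work on the total space of the constraint distribution $\D\subseteq TQ$, where the Lagrange--d'Alembert dynamics of $(Q,g,\D)$ is encoded by a smooth second-order vector field $\Gamma^{nh}$ (the nonholonomic spray) whose integral curves project onto the solutions of \eqref{cnh}; this is exactly the content of the nonholonomic connection $\nabla^{nh}$ introduced in Section~\ref{section2}. The crucial structural fact I would isolate is the $2$-homogeneity of the dynamics: since $L_{g}$ is fiberwise quadratic and the conditions defining $\D$ and $\D^\bot$ are linear, if $c$ solves \eqref{cnh} then so does $\tilde c(t)=c(\lambda t)$, because $\dot{\tilde c}(t)=\lambda\,\dot c(\lambda t)\in\D$ and $\nabla^g_{\dot{\tilde c}(t)}\dot{\tilde c}(t)=\lambda^2\,\nabla^g_{\dot c(\lambda t)}\dot c(\lambda t)\in\D^\bot$. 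By uniqueness of solutions this yields the reparametrization law $c_{\lambda v_{q}}(s)=c_{v_{q}}(\lambda s)$.

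I would then define $\exp^{nh}_{q}(v_{q}):=c_{v_{q}}(1)$ for $v_{q}$ near $0_{q}$ in $\D_{q}$; smooth dependence on initial conditions makes it smooth, and the homogeneity law gives \eqref{radial:standard:form} at once, so that i)(b) holds. Differentiating \eqref{radial:standard:form} at $t=0$ and using the canonical identification $T_{0_{q}}\mathcal{U}_{0}\simeq\D_{q}$, the velocity of $t\mapsto tv_{q}$ is sent to $\dot c_{v_{q}}(0)=v_{q}$; hence $T_{0_{q}}\exp^{nh}_{q}$ is the inclusion $\D_{q}\hookrightarrow T_{q}Q$, proving i)(a). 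Being injective, this differential makes $\exp^{nh}_{q}$ an immersion near $0_{q}$, so after shrinking to a starshaped $\mathcal{U}_{0}$ it is an embedding onto a submanifold $\M^{nh}_{q}:=\exp^{nh}_{q}(\mathcal{U}_{0})$ with $T_{q}\M^{nh}_{q}=\D_{q}$. This completes i), and together with the reparametrization law it also proves the homothety statement opening ii).

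For the equivalence in ii) I would reduce everything, via the diffeomorphism $\exp^{nh}_{q}$, to the flat model $(\mathcal{U}_{0},\mathcal{G}_{0})$, in which the radial nonholonomic trajectories become the straight rays $t\mapsto tv_{q}$ and $\mathcal{G}_{0}=(\exp^{nh}_{q})^{*}g^{nh}_{q}$. The key lemma to establish is that the rays are constant-speed minimizing geodesics of $\mathcal{G}_{0}$ if and only if $\mathcal{G}_{0}$ satisfies the Gauss condition. For the implication $(\Leftarrow)$ I would differentiate the Gauss condition $\mathcal{G}_{0}(v)(v,\cdot)=\mathcal{G}_{0}(0)(v,\cdot)$ along the radial direction; a short computation forces both $v^{i}v^{j}\partial_{i}(\mathcal{G}_{0})_{mj}=0$ and $v^{i}v^{j}\partial_{m}(\mathcal{G}_{0})_{ij}=0$, which is precisely $\Gamma^{k}_{ij}(v)\,v^{i}v^{j}=0$, i.e.\ the rays are geodesics, and with constant speed since $\mathcal{G}_{0}(v)(v,v)=\mathcal{G}_{0}(0)(v,v)$. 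Minimality then follows from the classical argument on a starshaped domain, splitting any competitor curve into radial and transverse parts and bounding its length below by the radial length, the Gauss condition being exactly Gauss's lemma for $\exp^{nh}_{q}$. For $(\Rightarrow)$, if the rays are minimizing geodesics then the Riemannian exponential of $\mathcal{G}_{0}$ at $0_{q}$ is the identity, and the classical Gauss lemma for $\mathcal{G}_{0}$ reads verbatim as the Gauss condition.

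Finally, for iii) I would exhibit an explicit metric by pushing forward under $\exp^{nh}_{q}$ the constant metric on $\mathcal{U}_{0}$ obtained by freezing the inner product $g|_{\D_{q}}$ at every point; its pullback $\mathcal{G}_{0}$ is constant, hence trivially satisfies the Gauss condition, so such $g^{nh}_{q}$ exist and make the rays minimizing geodesics. Moreover $\exp^{nh}_{q}$ is by construction an isometry $(\mathcal{U}_{0},\mathcal{G}_{0})\to(\M^{nh}_{q},g^{nh}_{q})$, so it intertwines Riemannian exponentials; since the flat model has $\exp^{\mathcal{G}_{0}}_{0_{q}}=\mathrm{id}$ and $T_{0_{q}}\exp^{nh}_{q}$ is the inclusion, this yields $\exp^{g^{nh}_{q}}_{q}=\exp^{nh}_{q}$. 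The main obstacle I anticipate is not any single computation but securing the global hypotheses behind them: that the Lagrange--d'Alembert problem genuinely defines a smooth $2$-homogeneous spray on $\D$ (so that $\exp^{nh}_{q}$ and \eqref{radial:standard:form} are well posed and $\mathcal{U}_{0}$ can be taken starshaped), and that the immersion can be upgraded to an embedded submanifold after shrinking the domain. Once these are in place, the Gauss-condition equivalence is the classical normal-coordinate computation transplanted to $\D_{q}$.
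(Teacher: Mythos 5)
Your proposal is correct and follows essentially the same route as the paper: construct $\mathrm{exp}^{nh}_q$ from the nonholonomic spray, use the rescaling/homogeneity law to get \eqref{radial:standard:form} and the fact that $T_{0_q}\mathrm{exp}^{nh}_q$ is the inclusion (hence an embedding after shrinking to a starshaped ${\mathcal U}_0$), transport everything to the flat model $({\mathcal U}_0,{\mathcal G}_0)$ where the equivalence ``rays are minimizing geodesics $\Leftrightarrow$ Gauss condition'' is the key lemma (Theorem \ref{key1}), and obtain existence in iii) from the constant metric on $\D_q$. The only local divergence is in proving that the Gauss condition forces the rays to be geodesics: you do the Christoffel-symbol computation $\Gamma^k_{ij}v^iv^j=0$ directly (which checks out), whereas the paper argues via the radial distance function $r^{\mathcal G}(v)=\|v\|_{{\mathcal G}(v)}$, showing its ${\mathcal G}$-gradient is a unit vector field whose integral curves are geodesics; both are standard and equivalent.
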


After $ii)$ {and $iii)$} in Theorem \ref{main-theorem}, it is important to explicitly obtain Riemannian metrics on ${\mathcal U}_0 \subset \D_q$ such that Gauss condition holds for them.  A natural candidate for such a Riemannian metric is ${\mathcal G}_0 = ({\rm exp}^{nh}_q)^*(i_q^*g)$, where $i_q: {\mathcal M}^{nh}_q \to Q$ is the canonical inclusion. Unfortunately, ${\mathcal G}_0$ does not satisfy, in general, Gauss condition (see Example \ref{no-Gauss-condition}). However, the flat metric on ${\mathcal U}_0$ induced by a scalar product on $\D_q$ is a trivial example of a {Riemannian metric satisfying Gauss condition}. In fact, a general method to obtain examples of such metrics is presented in Remark \ref{Existence-metrics-GC} (see also Section \ref{examples} for the discussion in explicit examples of kinetic nonholonomic systems).   

\medskip

\noindent {\bf Organization of the paper.} The paper is structured as follows: in Section \ref{section-v}, we introduce some preliminary results necessary to prove Theorem \ref{main-theorem} as, for instance, the characterization of Riemannian metrics on vector spaces satisfying the so-called Gauss condition (see Definition \ref{gauss-condition}) given in Theorem  \ref{key1}. In Section \ref{section2} we review the construction of the nonholonomic exponential map constructed in \citep*{AMM} and we derive some of its main properties {for the particular case of kinetic nonholonomic systems.} By making use of this background, we prove the main result of this paper, Theorem \ref{main-theorem},  in Section \ref{section-proof}.  In Section \ref{examples}, we discuss several interesting examples of kinetic nonholonomic systems  $(Q, g, \D)$ and its associated Riemannian metrics whose radial geodesics are nonholonomic solutions. Finally, in Section \ref{conclusions}, we point out some of the important future lines of research opened up by the results of this article and, in particular, by Theorem \ref{main-theorem}.

\section{A result on Riemannian geometry of vector spaces}\label{section-v}

In this section, we will prove a result on Riemannian metrics in vector spaces which will be very useful in the proof of Theorem \ref{main-theorem}.

First of all, we will review a classical result in Riemannian geometry: the {\bf Gauss' Lemma} (see, for instance, \citep*{docarmo,O'Neill}).

Let $g$ be a Riemannian metric on a manifold $Q$ and $q$ a point in $Q$. Denote by ${\rm exp}^g_q: T_qQ \to Q$ the Riemannian exponential at the point $q$. As we know (see \citep*{docarmo,O'Neill}),
\begin{equation}\label{exponential-q0}
{\rm exp}^g_q(v_q) = c_{v_q}(1),
\end{equation}
for $v_q \in T_qQ$, where $c_{v_q}: [0, 1] \to Q$ is the unique geodesic in $Q$ with initial velocity $v_q$, that is, $c_{v_q}(0) = q$ and $\dot{c}_{v_q}(0) = v_q$. Note that ${\rm exp}^g_q(0_{q}) = q$. Moreover, there exist open subsets ${\mathcal U} \subseteq T_qQ$ and $U \subseteq Q$, with ${\mathcal U}$ starshaped about $0_q \in {\mathcal U}$ and $q \in U$, such that
\[
{\rm exp}^g_q: {\mathcal U} \to U
\]
is a diffeomorphism and 
\begin{equation}\label{prop-exp-map}
{\rm exp}^g_q(tv_q) = c_{v_q}(t), \; \; T_{0_q}{\rm exp}^g_q = id_{T_qQ}: T_{0_q}{\mathcal U} \simeq T_qQ \to T_qQ.
\end{equation}
In fact, the curve $t\in [0,1] \to c_{v_q}(t)\in Q$ is a minimizing geodesic from $q$ to $c_{v_q}(1)$. Then the
Gauss' lemma (see \citep*{docarmo,O'Neill}) implies that
\begin{equation}\label{G-lemma-1}
g({\rm exp}^g_q(v_q))((T_{v_q}{\rm exp}^g_q)(v_q)^{\bf v}_{v_q}, (T_{v_q}{\rm exp}^g_q)(w_q)^{\bf v}_{v_q}) = g(q)(v_q, w_q),
\end{equation}
for $v_q \in {\mathcal U}$ and $w_q \in T_qQ$, where $(v_q)^{\bf v}_{v_q}, (w_q)^{\bf v}_{v_q} \in T_{v_q}(T_qQ)$ are the vertical lifts to $TQ$ at $v_q$ of $v_q$ and $w_q$, respectively, {given by (\ref{vertical-lift-v})}. {So, under the linear identification $^{\bf v}_{v_q}: T_qQ \to T_{v_q}(T_qQ)$} between $T_qQ$ and $T_{v_q}(T_qQ)$, equation \eqref{G-lemma-1} gives
\begin{equation}
	g({\rm exp}^g_q(v_q))(T_{v_q}{\rm exp}^g_q (v_q),T_{v_q}{\rm exp}^g_q (w_q)) = g(q)(v_q, w_q).
\end{equation}

Thus, if we consider the Riemannian metric ${\mathcal G}_{0}$ on ${\mathcal U}$ defined by
\[
{\mathcal G}_{0} = ({\rm exp}^g_q)^*(g)
\]
then we deduce that
\[
{\mathcal G}_{0}(v_q)(v_q, w_q) = {\mathcal G}_{0}(0_q)(v_q, w_q).
\]
The previous fact motivates the definition below. Given a vector space  $E$ equipped with a Riemannian metric ${\mathcal G}$, if $u \in E$ then, as above, we will identify the tangent space $T_uE$ with $E$.    
\begin{definition}\label{gauss-condition}
We say that the  Riemannian manifold $(E, {\mathcal G})$ satisfies  the {\bf Gauss condition} if 
\[
{\mathcal G}(v) (v, w)={\mathcal G}(0)(v, w), \qquad \forall v, w\in E\; .
\]
The previous definition may also be applied to an open subset ${\mathcal U}$ of $E$ which contains the zero vector of $E$, that is, a Riemannian metric ${\mathcal G}$ on ${\mathcal U}$ satisfies {\bf the Gauss condition}  if 
\[
{\mathcal G}(v) (v, w)={\mathcal G}(0)(v, w), \qquad \forall v \in {\mathcal U} \mbox{ and } w\in E.
\]
\end{definition}
\begin{remark}\label{Existence-metrics-GC}{\rm If ${\mathcal G}$ is the flat metric on $E$ induced by a scalar product on $E$ then it is clear that ${\mathcal G}$ satisfies the Gauss condition. More generally, 
let $\overline{\mathcal G}$ be an arbitrary Riemannian metric on $E$ and ${\rm exp}^{\overline{\mathcal G}}_0: T_0E \to E$ the Riemannian exponential at $0$. As we know, there exist open subsets ${\mathcal U} \subseteq T_0E$ and $U \subseteq E$, such that $0 \in {\mathcal U} \cap U$ and
\[
{\rm exp}^{\overline{\mathcal G}}_0: {\mathcal U} \to U
\]
is a diffeomorphism. Then, proceeding as in the previous discussion to Definition \ref{gauss-condition}, we deduce that the Riemannian metric on ${\mathcal U}$ given by
\[
{\mathcal G} = \left(({\rm exp}^{\overline{\mathcal G}}_0)_{|{\mathcal U}}\right)^*(\overline{\mathcal G}_{|U})
\]
satisfies the Gauss condition. 
}
\end{remark}

{Now, let $\mathcal{G}$ be a Riemannian metric on a vector space $E$ and ${\rm exp}^{\mathcal G}_0: {\mathcal U} \subseteq T_0E \simeq E \to U\subseteq E$} the Riemannian exponential map at the zero vector $0 \in E$ (where we have used the canonical identification between $T_0E$ and $E$). Then, we denote by $r^{\mathcal G}_0: U \subseteq E \to \mathbb{R}$ the standard Riemannian radial function at $0$ for the Riemannian manifold $(E, {\mathcal G})$, that is, (see \citep*{docarmo,O'Neill}),
\[
r^{{\mathcal G}}_0(v) = \| ({\rm exp}^{\mathcal G}_0)^{-1}(v)\|_{{\mathcal G}(0)}, \; \; \mbox{ for } v \in U.
\] 
Moreover, using that the Riemannian manifold is a vector space, we can also define the {\bf radial distance function} 
	 $r^{{\mathcal G}}: E\rightarrow {\mathbb R}$ given by
	$$r^{{\mathcal G}}(v)=\|v\|_{{\mathcal G(v)}}=\sqrt{{\mathcal G}(v)(v, v)}  \;.$$
	In general, we have that $r^{\mathcal G}_0 \neq (r^{\mathcal G})_{|U}$. However, if $(E, {\mathcal G})$ satisfies the Gauss condition in ${\mathcal U}$, we will see that that ${\rm exp}^{\mathcal G}_0: {\mathcal U} \subseteq E \to E$ is the canonical inclusion of ${\mathcal U}$ in $E$ (see Theorem \ref{key1} below) and, thus, $r^{\mathcal G}_0 = (r^{\mathcal G})_{|U}$. 
	
First, we will prove the following result.
\begin{lemma}\label{lema1}
	The radial distance function $r^{{\mathcal G}}: E \to \mathbb{R}$ is smooth on $E\setminus\{0\}$. Moreover, if $(E, {\mathcal G})$ satisfies  the Gauss condition, then the gradient vector field $\emph{grad}_{{\mathcal G}} r^{{\mathcal G}}\Big|_{E\setminus\{0\}}$ of $r^{{\mathcal G}}$ on $E\setminus\{0\}$ is given by
	\[
	\emph{grad}_{{\mathcal G}} r^{{\mathcal G}}(v)=\frac{v}{\,\ \|v\|_{\mathcal G(0)}}, \; \; \mbox{ for } v \in E\setminus\{0\}
    \]
and, in addition, it is	unitary relative to the metric ${\mathcal G}$, that is, 
\begin{equation*}
	\|\emph{grad}_{{\mathcal G}} r^{{\mathcal G}}(v)\|_{{\mathcal G(v)}}=1, \; \;   \mbox{ for } v \in E\setminus\{0\}.
\end{equation*}
	\end{lemma}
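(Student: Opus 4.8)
The plan is to reduce the whole statement to elementary manipulations of the smooth function $F(v) = \mathcal{G}(v)(v,v)$, since $r^{\mathcal{G}} = \sqrt{F}$.

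\emph{Smoothness.} Because $\mathcal{G}$ is a smooth Riemannian metric on $E$, the map $v \mapsto \mathcal{G}(v)(v,v)$ is smooth on all of $E$, and it is strictly positive on $E \setminus \{0\}$ thanks to the positive-definiteness of each $\mathcal{G}(v)$. Since $t \mapsto \sqrt{t}$ is smooth on $(0,\infty)$, the composition $r^{\mathcal{G}} = \sqrt{F}$ is smooth on $E \setminus \{0\}$; as with the ordinary norm, smoothness fails only at the origin.

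\emph{The key simplification.} Evaluating the Gauss condition at $w = v$ gives $\mathcal{G}(v)(v,v) = \mathcal{G}(0)(v,v)$, whence
\[
r^{\mathcal{G}}(v) = \sqrt{\mathcal{G}(v)(v,v)} = \sqrt{\mathcal{G}(0)(v,v)} = \|v\|_{\mathcal{G}(0)}.
\]
This is the crucial point: the Gauss condition collapses the radial distance function, a priori built from the varying metric, into the norm of the \emph{fixed} inner product $\mathcal{G}(0)$. The differential of the latter is classical,
\[
d r^{\mathcal{G}}(v)(w) = \frac{\mathcal{G}(0)(v,w)}{\|v\|_{\mathcal{G}(0)}}, \qquad v \in E \setminus \{0\}.
\]

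\emph{Gradient and unit length.} By definition $\mathrm{grad}_{\mathcal{G}} r^{\mathcal{G}}(v)$ is the unique vector with $\mathcal{G}(v)(\mathrm{grad}_{\mathcal{G}} r^{\mathcal{G}}(v), w) = d r^{\mathcal{G}}(v)(w)$ for all $w \in E$. I would check that the proposed field $v/\|v\|_{\mathcal{G}(0)}$ satisfies this, invoking the Gauss condition a second time:
\[
\mathcal{G}(v)\!\left(\frac{v}{\|v\|_{\mathcal{G}(0)}}, w\right) = \frac{\mathcal{G}(v)(v,w)}{\|v\|_{\mathcal{G}(0)}} = \frac{\mathcal{G}(0)(v,w)}{\|v\|_{\mathcal{G}(0)}} = d r^{\mathcal{G}}(v)(w),
\]
so uniqueness of the gradient yields the stated formula. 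The unit-length assertion is then immediate, since $\mathcal{G}(v)(v,v) = \|v\|_{\mathcal{G}(0)}^2$ gives $\|\mathrm{grad}_{\mathcal{G}} r^{\mathcal{G}}(v)\|_{\mathcal{G}(v)}^2 = \mathcal{G}(v)(v,v)/\|v\|_{\mathcal{G}(0)}^2 = 1$.

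The argument is short, and I do not expect a genuine obstacle; the only point demanding care is to keep the two metrics $\mathcal{G}(v)$ and $\mathcal{G}(0)$ rigorously separated. The differential of $r^{\mathcal{G}}$ is naturally expressed through the frozen metric $\mathcal{G}(0)$, whereas the gradient must be computed against the full varying metric $\mathcal{G}$; it is exactly the Gauss condition $\mathcal{G}(v)(v,w) = \mathcal{G}(0)(v,w)$ that bridges these in the verification step.
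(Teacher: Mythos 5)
Your proposal is correct and follows essentially the same route as the paper: both arguments use the Gauss condition to identify $r^{\mathcal G}$ with the fixed norm $\|\cdot\|_{{\mathcal G}(0)}$, differentiate that, and then invoke the Gauss condition a second time to convert ${\mathcal G}(0)(v,w)$ into ${\mathcal G}(v)(v,w)$ and read off the gradient and its unit length. No gaps.
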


\begin{proof}
	The first part of the lemma is obvious. On the other hand, using the definition of the gradient vector field and the Gauss condition, we have that, for $v \in E\setminus\{0\}$ and $u \in E$,
	\begin{eqnarray*}
{\mathcal G}(v)(\text{grad}_{{\mathcal G}} r^{{\mathcal G}}(v), u) &=& 
	\langle dr^{{\mathcal G}}(v), u\rangle 
	=\langle dr^{{\mathcal G}}(v), \frac{d}{dt}\Big|_{t=0} (v+tu)\rangle \\
	&=&\frac{d}{dt}\Big|_{t=0} r^{\mathcal G}(v+tu)
	=\frac{d}{dt}\Big|_{t=0} \sqrt{\left({\mathcal G}(0)(v+tu,v+tu\right)}\\
	&=&\frac{d}{dt}\Big|_{t=0} \sqrt{
		\|v\|^2_{\mathcal G(0)}+2t {\mathcal G}(0)(u, v)+t^2	\|u\|^2_{\mathcal G(0)}}\\
	&=&\frac{{\mathcal G}(0)(v, u)}{\|v\|_{\mathcal G(0)}}\\
	&=&{\mathcal G}(v)\left( \frac{v}{\|v\|_{\mathcal G(0)}} , u\right)
	\end{eqnarray*}
	This implies that, 
	\[
	\text{grad}_{{\mathcal G}} r^{{\mathcal G}}(v)=\frac{v}{\|v\|_{\mathcal G(0)}}.
	\]
	Now, using again the Gauss condition, it follows that
	\[
	{\mathcal G}(v)(\frac{v}{\|v\|_{\mathcal G(0)}}, \frac{v}{\|v\|_{\mathcal G(0)}}) 
	= \frac{1}{({\|v\|_{{\mathcal G}(0)}})^2}{\mathcal G}(0)(v, v) = 1
	\]
	which concludes the proof of the result. 
\end{proof}

	Another fact which is relevant for our purposes is the following.

\begin{lemma}\label{lema2}
{If $(E, {\mathcal G})$ satisfies  the Gauss condition then   
the} integral curves of the vector field $\; U = \emph{grad}_{{\mathcal G}} r^{{\mathcal G}}$ are geodesic.
\end{lemma}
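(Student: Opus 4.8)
The plan is to exploit the classical principle that the gradient of a function satisfying the eikonal equation $\|\mathrm{grad}\,f\|=1$ has geodesic integral curves, and to verify its hypotheses directly from Lemma~\ref{lema1}. Writing $\nabla$ for the Levi-Civita connection of $\mathcal{G}$, it suffices to show that $\nabla_U U = 0$ on $E \setminus \{0\}$, since this is precisely the statement that the integral curves of $U$ are geodesics (in fact unit-speed ones, as we note at the end).

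First I would record the two outputs of Lemma~\ref{lema1}: that $U = \mathrm{grad}_{\mathcal{G}} r^{\mathcal{G}}$ is a \emph{gradient} vector field, and that it is $\mathcal{G}$-unitary, i.e. $\mathcal{G}(v)(U(v), U(v)) = 1$ for all $v \in E \setminus \{0\}$. Differentiating this norm-one identity along an arbitrary vector field $W$ and using the metric compatibility of $\nabla$ gives $2\,\mathcal{G}(\nabla_W U, U) = 0$, so that $\mathcal{G}(\nabla_W U, U) = 0$ for every $W$. Next I would use that $U$ is a gradient to invoke the symmetry of the Hessian: because $\nabla$ is torsion-free and $\mathcal{G}(U, \cdot) = d r^{\mathcal{G}}$, the bilinear form $(X, Y) \mapsto \mathcal{G}(\nabla_X U, Y)$ is symmetric. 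Setting $X = U$ and $Y = W$ then yields
\[
\mathcal{G}(\nabla_U U, W) = \mathcal{G}(\nabla_W U, U) = 0
\]
for all $W$, whence $\nabla_U U = 0$, which is the desired conclusion.

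The only genuinely delicate point is the symmetry of $(X,Y) \mapsto \mathcal{G}(\nabla_X U, Y)$ for a gradient field $U = \mathrm{grad}_{\mathcal{G}} r^{\mathcal{G}}$; this follows from the computation $\mathcal{G}(\nabla_X U, Y) - \mathcal{G}(\nabla_Y U, X) = X(Y r^{\mathcal{G}}) - Y(X r^{\mathcal{G}}) - [X,Y]r^{\mathcal{G}} = 0$, using that $\nabla$ is torsion-free together with $\mathcal{G}(U, \cdot) = d r^{\mathcal{G}}$. Everything else — differentiating the unit-length condition and assembling the two identities — is routine. As a byproduct, since $U$ has unit $\mathcal{G}$-length, the integral curves are automatically parametrized by arc length, so they are unit-speed geodesics emanating radially from the origin.
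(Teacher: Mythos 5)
Your proof is correct and follows essentially the same route as the paper: both establish $\nabla_U U=0$ by combining metric compatibility of the unit-length condition (giving $\mathcal{G}(\nabla_W U,U)=0$) with torsion-freeness and the fact that $\mathcal{G}(U,\cdot)=dr^{\mathcal{G}}$. Your packaging via the symmetry of the Hessian of $r^{\mathcal{G}}$ is just a cleaner rearrangement of the identical computation the paper carries out directly on $\mathcal{G}(\nabla_U U, X)$.
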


\begin{proof}
This follows using that $U$ is unitary and the gradient vector field of a real $C^{\infty}$-function on $E\setminus\{0\}$. 

In fact, if $\nabla$ is the Levi-Civita connection  of ${\mathcal G}$ and $X \in \frak{X}(E\setminus\{0\})$ then, since $\nabla$ is metric and torsion free, we have that
\[
{\mathcal G}(\nabla_UU, X) = U({\mathcal G}(U, X))- {\mathcal G}(U, \nabla_UX) = U(X(r^{{\mathcal G}})) - {\mathcal G}(U, [U, X]) + {\mathcal G}(U, \nabla_XU).
\]
On the other hand, using again that $\nabla$ is metric and that $U$ is unitary, we deduce that
\[
0 = X({\mathcal G}(U, U)) = 2{\mathcal G}(U, \nabla_X U).
\]
Thus, we obtain that
\begin{equation}\label{U-geodesics}
\begin{array}{rcl}
{\mathcal G}(\nabla_UU, X) &=&  U(X(r^{{\mathcal G}})) - [U, X](r^{{\mathcal G}}) \\ &= & U(X(r^{{\mathcal G}})) - U(X(r^{{\mathcal G}})) + X(U(r^{{\mathcal G}})) = 0, 
\end{array}
\end{equation}
where the last equality follows using that
\[
U(r^{{\mathcal G}}) = dr^{{\mathcal G}}(U) = {\mathcal G}(U, U) = 1.
\]
Finally, relation (\ref{U-geodesics}) implies that
$\nabla_U U = 0$
and, therefore, the integral curves of $U$ are geodesic.
\end{proof}

Now, we are able to prove the result we announced at the beginning of the section.	
\begin{theorem}\label{key1}
If ${\mathcal G}$ is a Riemannian metric on a vector space $E$ and
\[
{\rm exp}^{\mathcal G}_0: {\mathcal U} \subseteq T_0E \simeq E \to E,
\]
is the exponential map at the zero vector then
the following conditions are equivalent: 
\begin{itemize}
	\item[i)]  The map ${\rm exp}^{\mathcal G}_0: {\mathcal U}\subseteq E \to E$ is the canonical inclusion of ${\mathcal U}$ in $E$.
	\item[ii)] For each $u \in {\mathcal U}$,  the line
	\[
	t \in [0, 1] \to tu \in {\mathcal U}
	\]
starting at  the zero vector is a minimizing geodesic for $(E, {\mathcal G})$ with initial velocity $u$.
	\item[iii)] $(E, {\mathcal G})$ satisfies the Gauss condition in ${\mathcal U}$.
\end{itemize}
\end{theorem}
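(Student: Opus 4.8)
The plan is to establish all three equivalences through the cycle $i) \Rightarrow iii) \Rightarrow ii) \Rightarrow i)$, taking $E$ itself as the ambient Riemannian manifold and $0 \in E$ as the base point, and feeding the two preceding lemmas into the single substantive implication $iii) \Rightarrow ii)$.

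First I would dispose of the two easy arrows. For $i) \Rightarrow iii)$ I would simply specialize the Gauss' lemma identity \eqref{G-lemma-1} to $(Q,q) = (E,0)$: if $\mathrm{exp}^{\mathcal G}_0$ is the canonical inclusion of ${\mathcal U}$, then $\mathrm{exp}^{\mathcal G}_0(v) = v$ and, under the identification \eqref{vertical-lift-v}, $T_v\mathrm{exp}^{\mathcal G}_0 = \mathrm{id}$, so substituting these into (the identified form of) \eqref{G-lemma-1} collapses its left-hand side to ${\mathcal G}(v)(v,w)$, and the identity becomes exactly ${\mathcal G}(v)(v,w) = {\mathcal G}(0)(v,w)$, i.e. the Gauss condition. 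For $ii) \Rightarrow i)$ I would use only uniqueness of geodesics: if for each $u \in {\mathcal U}$ the line $t \mapsto tu$ is the geodesic with initial velocity $u$, then $c_u(t) = tu$, whence $\mathrm{exp}^{\mathcal G}_0(u) = c_u(1) = u$; the minimizing hypothesis is not needed for this direction.

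The heart of the proof is $iii) \Rightarrow ii)$. Assuming the Gauss condition, Lemma \ref{lema1} provides $U := \mathrm{grad}_{\mathcal G} r^{\mathcal G}$ with $U(v) = v/\|v\|_{{\mathcal G}(0)}$ and $\|U\|_{\mathcal G} \equiv 1$ on $E\setminus\{0\}$, while Lemma \ref{lema2} says the integral curves of $U$ are geodesics. Since the line $t \mapsto tu$ is an affine reparametrization of the unit-speed integral curve $s \mapsto s\,u/\|u\|_{{\mathcal G}(0)}$ of $U$, it is a geodesic on $(0,1]$, which I would extend to $t=0$ by continuity of the geodesic equation; its velocity is the constant vector $u$, so its initial velocity is $u$. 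For the minimizing property I would run the standard distance-function estimate: for any curve $c\colon[0,1]\to{\mathcal U}$ from $0$ to $u$ that avoids the origin on $(0,1]$, Cauchy--Schwarz together with $\|U\|_{\mathcal G}=1$ gives $\|\dot c\|_{\mathcal G} \ge {\mathcal G}(U,\dot c) = dr^{\mathcal G}(\dot c) = (r^{\mathcal G}\circ c)'$, so the length of $c$ is at least $r^{\mathcal G}(u) = \|u\|_{{\mathcal G}(0)}$; since the Gauss condition forces $\|u\|_{{\mathcal G}(tu)} = \|u\|_{{\mathcal G}(0)}$, the radial line has length exactly $\|u\|_{{\mathcal G}(0)}$ and thus realizes the lower bound.

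The main obstacle I anticipate is the behavior at the origin, where $U$ and $r^{\mathcal G}$ fail to be smooth: I must justify that the radial line stays a geodesic through $t=0$ (not merely on $(0,1]$) and that the length estimate is unaffected by competitor curves touching $0$ only at the initial instant, which is the usual technical point in showing that radial geodesics minimize. Once these limiting arguments are settled, the three implications close the cycle and yield the stated equivalence of $i)$, $ii)$ and $iii)$.
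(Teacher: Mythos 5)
Your proof is correct and follows essentially the same route as the paper: Gauss' lemma gives $i)\Rightarrow iii)$, uniqueness of geodesics gives $ii)\Rightarrow i)$, and Lemmas \ref{lema1} and \ref{lema2} supply the unit radial gradient field whose integral curves are geodesics for the substantive converse. The only (minor) difference is that you close the cycle via $iii)\Rightarrow ii)$ and prove the minimizing property explicitly with the calibration estimate $\|\dot c\|_{\mathcal G}\ge dr^{\mathcal G}(\dot c)$, whereas the paper argues $iii)\Rightarrow i)$ and obtains minimization from the standard fact that $t\mapsto {\rm exp}^{\mathcal G}_0(tu)$ is a minimizing geodesic on a normal neighbourhood.
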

	\begin{proof}$\,$
As we know,
\[
{\rm exp}^{\mathcal G}_0(tu) = c_u(t), \; \; \mbox{ for } t\in [0, 1]
\]
where $c_u: [0, 1] \to E$ is the minimizing geodesic with initial velocity $u \in {\mathcal U}$. 		
$[i)\Leftrightarrow ii)]$ If ${\rm exp}^{\mathcal G}_0: {\mathcal U} \to E$ is the inclusion of ${\mathcal U}$ in E then $c_u(t) = tu$ which proves $i) \Rightarrow ii)$.

Conversely, if $ii)$ holds then it is clear that
\[
{\rm exp}^{\mathcal G}_0(tu) = tu, \; \; \mbox{ for } u\in {\mathcal U},
\]
and, thus, ${\rm exp}^{\mathcal G}_0: {\mathcal U}\subseteq E \to E$ is the canonical inclusion of ${\mathcal U}$ in $E$.

$[i)\Rightarrow iii)]$
If $u \in {\mathcal U}$ and $v \in E$ then, using the Gauss' Lemma and $i)$,  we have that 
\begin{eqnarray*}
	{\mathcal G}(u)(u,v)&=&	{\mathcal G}({\rm exp}^{\mathcal G}_0(u))
	(T_u{\rm exp}^{\mathcal G}_0 (u), T_u{\rm exp}^{\mathcal G}_0 (v))\\
	&=&{\mathcal G}(0)(u,v).
\end{eqnarray*}
So, $(E, {\mathcal G})$ satisfies the Gauss condition in ${\mathcal U}$. 

$[iii)\Rightarrow i)]$ Let $r^{{\mathcal G}}: E \to \mathbb{R}$ be the radial distance function and $U = \text{grad}_{{\mathcal G}} r^{{\mathcal G}}$. From Lemma \ref{lema1}, we have that $U(u) = \frac{u}{\|u\|_{\mathcal G(0)}}$, for  $ u\in E\setminus\{0\}$. This implies that the line $l_{\frac{u}{\|u\|_{{\mathcal G}(0)}}}: (0, \infty) \to E \setminus\{0\}$ given by
\[
l_{\frac{u}{\|u\|_{{\mathcal G}(0)}}}(t) = t \frac{u}{\|u\|_{\mathcal{G}(0)}}
\]
is an integral curve of $U$. Then, using Lemma \ref{lema2}, $l_{\frac{u}{\|u\|_{{\mathcal G}(0)}}}$ is a geodesic. Thus, the homothetic reparametrization of $l_{\frac{u}{\|u\|_{{\mathcal G}(0)}}}$ defined by
\[
t \to  l_{\frac{u}{\|u\|_{{\mathcal G}(0)}}}(\|u\|_{{\mathcal G}(0)} t)= tu
\]
is also a geodesic. It is clear that the curve defined above is continuously extendible to $t = 0$. So, from Lemma 8 of Chapter 5 in \citep*{O'Neill}, it is also extendible as a geodesic. Moreover, its initial velocity is $u$. Thus
\[
{\rm exp}^{{\mathcal G}}_0(tu) = tu, \; \; \mbox{ for } t\in [0, 1].
\]
This proves $i)$.
\end{proof}

\section{Nonholonomic exponential map for kinetic nonholonomic systems}\label{section2}
In this section, we will review the definition of the nonholonomic exponential map for a kinetic nonholonomic system and we will obtain some results on this map (for the definition of the nonholonomic exponential map associated with an arbitrary nonholonomic system, see \citep*{AMM}).

First of all, we will see that the solutions of the equations of motion of a kinetic nonholonomic system are the geodesics of a  constrained connection (the nonholonomic connection) on the configuration space restricted to initial conditions in $\D$. This construction seems to have been first made in \citep*{Synge28}.

As we have commented in the introduction, a kinetic nonholonomic system is determined by the triple $(Q, g, \D)$, where $Q$ is a finite dimensional smooth manifold, $g$ is a Riemannian metric on $Q$ and $\D$ is a nonintegrable distribution determining the nonholonomic constraints. 

The nonholonomic connection $\nabla^{nh}$ is   defined as
\begin{equation}\label{nhconnection}
\nabla^{nh}_{X} Y:=P(\nabla_{X}^{g} Y)+\nabla^{g}_{X}[P'(Y)], \; \; \mbox{ for } X, Y \in \frak{X}(Q),
\end{equation}
where  $P:TQ\rightarrow \D$ is the associated  orthogonal projector onto the distribution $\D$ and $P':TQ\rightarrow\D^{\bot}$ is the orthogonal projector onto $\D^{\bot}$, the orthogonal distribution.

This connection is not symmetric in general neither it is compatible with the metric. Nevertheless, it satisfies the more restricted condition of compatibility with the Riemannian metric $g$ over sections of $\D$ (see \citep*{Lewis98}), i.e.,
\begin{equation}\label{Dcompatibility}
X(g(Y,Z))=g(\nabla^{nh}_{X} Y,Z)+g(Y,\nabla^{nh}_{X} Z), \quad \forall X,Y,Z\in\Gamma(\D).
\end{equation}
It is interesting to note that if   $Y\in\Gamma(\D)$ then $\nabla^{nh}_{X} Y=P(\nabla_{X}^{g} Y)\in \Gamma(\D)$ for any vector field $X\in\mathfrak{X}(Q)$.

The  geodesics $c$ for this connection which satisfy the constraints, that is, 
\begin{equation}\label{cnh-1}
\nabla^{nh}_{\dot{c}(t)}\dot{c}(t)=0\; ,\qquad   \dot{c}(0)\in \D_{c(0)}
\end{equation}
are precisely the solutions of the nonholonomic problem given by $(Q, g, \D)$ (see, for instance, \citep*{Lewis98, BLMMM2011}). Observe that this equation is equivalent to Equations (\ref{cnh}).

\begin{lemma}\label{homogeneity}
{Let $c_v: I \to Q$ be a nonholonomic geodesic with initial velocity $v \in \D_q$, i.e.}
\[
c_{v}(t_0) = q \; \; \mbox{ and } \; \; \dot{c}_{v}(t_0) = v.
\]
\begin{enumerate}
\item
{We have that
\begin{equation}\label{constant-norm}
\|\dot{c}_v(t) \|_{g(c_v(t))} = \| v\|_{g(q)}, \; \; \mbox{ for } t \in I.
\end{equation}}
\item
If $v = 0$ then $c_v(t) = q$, for every $t \in I$.
\item
If $v \neq 0$ then a reparametrization of $c_v$,
\[
c_v \circ h: J \to Q, \; \; s \to c_v(h(s))
\]
is a nonholonomic geodesic if and only if
\[
h(s) = as + b, \; \; \mbox{ with } a, b \in \mathbb{R}.
\]
\end{enumerate}
\end{lemma}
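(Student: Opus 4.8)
The plan is to establish the three items in order, with the single engine being the $\D$-compatibility \eqref{Dcompatibility} of $\nabla^{nh}$ together with the fact recorded just above the lemma, namely that $\nabla^{nh}$ preserves $\D$ (if $Y\in\Gamma(\D)$ then $\nabla^{nh}_{X}Y\in\Gamma(\D)$). The whole argument rests on first upgrading item~1.

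For item~1 I would begin by showing that the velocity of a nonholonomic geodesic stays tangent to $\D$. Indeed, by \eqref{cnh-1} the field $\dot{c}_v$ is $\nabla^{nh}$-parallel along $c_v$, and since $\D$ is invariant under $\nabla^{nh}$-parallel transport (by the displayed remark), the initial condition $\dot{c}_v(t_0)=v\in\D_q$ forces $\dot{c}_v(t)\in\D_{c_v(t)}$ for all $t\in I$. I would then transfer \eqref{Dcompatibility}, which is stated for sections of $\D$, to the induced covariant derivative $\tfrac{D^{nh}}{dt}$ along $c_v$: for vector fields $V,W$ along $c_v$ taking values in $\D$ one has $\tfrac{d}{dt}\,g(V,W)=g\!\left(\tfrac{D^{nh}V}{dt},W\right)+g\!\left(V,\tfrac{D^{nh}W}{dt}\right)$. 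Applying this with $V=W=\dot{c}_v$ and using $\tfrac{D^{nh}\dot{c}_v}{dt}=\nabla^{nh}_{\dot{c}_v}\dot{c}_v=0$ yields $\tfrac{d}{dt}\|\dot{c}_v(t)\|^2_{g}=0$, which is exactly \eqref{constant-norm}. Item~2 is then immediate: if $v=0$, \eqref{constant-norm} gives $\|\dot{c}_v(t)\|_{g}=0$ for all $t$, so $\dot{c}_v\equiv 0$ and $c_v\equiv q$ (equivalently, the constant curve is a nonholonomic geodesic with initial velocity $0$, and one invokes uniqueness of solutions of the geodesic ODE).

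For item~3 I would compute $\nabla^{nh}_{\dot{\tilde c}}\dot{\tilde c}$ for the reparametrization $\tilde c=c_v\circ h$. Writing $\dot{\tilde c}(s)=h'(s)\,\dot{c}_v(h(s))$ and using the product rule and the chain rule for the along-curve covariant derivative of $\nabla^{nh}$, together with $\nabla^{nh}_{\dot{c}_v}\dot{c}_v=0$, gives $\nabla^{nh}_{\dot{\tilde c}}\dot{\tilde c}=h''(s)\,\dot{c}_v(h(s))$. Since $\dot{\tilde c}(s)$ is a scalar multiple of $\dot{c}_v(h(s))\in\D$, the constraint is automatically satisfied, so $\tilde c$ is a nonholonomic geodesic if and only if $h''(s)\,\dot{c}_v(h(s))=0$ for all $s$. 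As $v\neq 0$, item~1 guarantees $\dot{c}_v(h(s))\neq 0$, whence the condition collapses to $h''\equiv 0$, i.e. $h(s)=as+b$; the converse is clear from $h''=0$ (the initial-velocity constraint being met because $\dot{\tilde c}$ is proportional to $\dot{c}_v$).

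The main obstacle I anticipate is the careful justification in item~1 that \eqref{Dcompatibility}, valid for sections of $\D$, may legitimately be applied to the velocity field along the curve. This needs both the invariance argument showing $\dot{c}_v$ remains $\D$-valued and the passage from the metricity identity for vector fields to the corresponding identity for fields along $c_v$; I would make the latter rigorous by working in a local $g$-orthonormal frame of $\D$ adapted to $c_v$ and expanding $V,W,\dot{c}_v$ in it. Once item~1 is secured, the computations behind items~2 and~3 are routine.
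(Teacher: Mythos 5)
Your proposal is correct and follows essentially the same route as the paper: item~1 via the $\D$-compatibility \eqref{Dcompatibility} applied to $\dot{c}_v$ (using that the velocity stays in $\D$), item~2 as an immediate consequence, and item~3 via the reparametrization computation $\nabla^{nh}_{\dot{\tilde c}}\dot{\tilde c}=(h')^2\nabla^{nh}_{\dot c_v}\dot c_v+h''\,\dot c_v\circ h=h''\,\dot c_v\circ h$ combined with the nonvanishing of $\dot c_v$ from \eqref{constant-norm}. The extra care you take in justifying that $\dot c_v$ remains $\D$-valued and that \eqref{Dcompatibility} transfers to fields along the curve is a point the paper takes for granted (it is built into the equivalence of \eqref{cnh} and \eqref{cnh-1}), and is a welcome addition rather than a deviation.
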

\begin{proof}
On one hand, using (\ref{Dcompatibility}) and the fact that $\dot{c}_v(t) \in \D_{c(t)}$, for every $t \in I$, it follows that
\[
\frac{d}{dt}\left(g(c_v(t))(\dot{c}_v(t), \dot{c}_v(t))\right) = 2g(c_v(t))(\nabla^{nh}_{\dot{c}_v(t)}\dot{c}_v(t), \dot{c}_v(t)) = 0.
\]
Thus, we deduce that
{
\[
\|\dot{c}_v(t) \|_{g(c_v(t))} = \|\dot{c}_v(0)\|_{g(q)} = \| v\|_{g(q)}, \; \; \mbox{ for } t \in I.
\]}
This proves item $1$. Item $2$ follows from $1$.

On the other hand, if $v \neq 0$ then considering the reparametrization $c_v \circ h: J \to Q$, we have that
\begin{eqnarray*}
\nabla^{nh}_{\frac{d}{ds}(c_v \circ h)}\frac{d}{ds}\left(c_v \circ h\right) 
& = & \nabla^{nh}_{\frac{dh}{ds}(\frac{dc_v}{dt} \circ h)} \frac{dh}{ds}\left(\frac{dc_v}{dt} \circ h\right)\\
 &= & 
\left(\frac{dh}{ds}\right)^2 \nabla^{nh}_{(\frac{dc_v}{dt} \circ h)} \left(\frac{dc_v}{dt} \circ h \right) + \frac{d^2h}{ds^2} \left(\frac{dc_v}{dt} \circ h \right).
\end{eqnarray*}

Now, from (\ref{constant-norm}), we obtain that
\[
\left(\frac{dc_v}{dt} \circ h \right)(s) \neq 0, \; \; \forall s.
\]
Therefore, since
$\nabla^{nh}_{(\frac{dc_v}{dt} \circ h)} (\frac{dc_v}{dt} \circ h) = 0$,
we conclude that
\[
\nabla^{nh}_{\frac{d}{ds}(c_v \circ h)}\frac{d}{ds}(c_v \circ h) = 0 \Leftrightarrow \frac{d^2h}{ds^2} = 0 \Leftrightarrow h(s) = as + b,
\]
with $a, b \in \mathbb{R}$.
\end{proof}
The tangent lifts of the nonholonomic geodesics of a kinetic nonholonomic system $(Q, g, \D)$ are the integral curves of a vector field of  $\Gamma_{(g,\D)}\in {\mathfrak X}(\D)$, which is a second-order differential equation along the points of $\D$, considered as a vector subbundle of $TQ$ (see, for instance, \citep*{LMdD1996}). 

Denote by $\phi_t^{\Gamma_{(g,\D)}}: \D \rightarrow \D$ the flow of $\Gamma_{(g,\D)}$ and for a sufficiently small positive number $h$, we consider the open subset of $\D$ given by
\begin{equation*}
M_{h}^{\Gamma_{(g,\D)}}=\{ v\in\D \ | \ \phi_{t}^{\Gamma_{(g,\D)}}(v) \ \text{is defined for} \ t\in [0,h] \}.
\end{equation*}

Using the last part of Lemma \ref{homogeneity} we can assume, without the loss of generality, that $h=1$. Then, we will denote the open subset 
$M_{1}^{\Gamma_{(g,\D)}}$ of $\D$ by $M^{\Gamma_{(g,\D)}}$. {In addition, from the second part of Lemma \ref{homogeneity}, we also have that the zero section in $\D$ is contained in $M^{\Gamma_{(g,\D)}}$.}

From the flow of $\Gamma_{(g,\D)}$, we can define the nonholonomic exponential map
	\begin{align*}
	\text{exp}^{\Gamma_{(g,\D)}}:M^{\Gamma_{(g,\D)}}\subseteq \D & \rightarrow  Q \times Q\\
	v & \mapsto(\tau_Q(v), \tau_{Q}\circ\phi_1^{\Gamma_{(g,\D)}}(v))
	\end{align*}
(see \citep*{AMM}).	
	We remark that if $c_{v}: [0, 1]\rightarrow Q$ is the nonholonomic geodesic with $\dot{c}_{v}(0)=v$ then
		\begin{equation}\label{nh-exponential-map}
		\text{exp}^{\Gamma_{(g,\D)}}(v)=(\tau_Q(v), c_{v}(1)).
	         \end{equation}
	
	We will use in the sequel the restriction of this map to the open subset $M_{q}^{\Gamma_{(g,\D)}} = M^{\Gamma_{(g,\D)}} \cap \D_{q} $ of $\D_{q}$ with $q\in Q$ fixed, that is, we define
	\[
	\text{exp}_{q}^{nh}=	\text{pr}_2 \circ \text{exp}^{\Gamma_{(g,\D)}}\Big|_{M_{q}^{\Gamma_{(g,\D)}}}: M_{q}^{\Gamma_{(g,\D)}} \subset \D_{q}\longrightarrow Q
	\]
So, if $v_q \in \D_{q}$ and $c_{v_q}: [0, 1] \to Q$ is the nonholonomic geodesic with initial velocity $v_q$ then
\[
\text{exp}_{q}^{nh}(v_q) = c_{v_q}(1).
\]
The reader is invited to compare the definition of $\text{exp}_{q}^{nh}$ with that of the Riemannian exponential at $q$ (see (\ref{exponential-q0})).

In fact, the nonholonomic exponential map conserves many of the properties we may find in Riemannian exponential maps. {For instance, verify that }
\begin{equation}\label{first-property}
\text{exp}_{q}^{nh}(0_{q}) = q.
\end{equation}
The second result we are going to prove is the \textit{rescaling lemma}.

\begin{lemma}\label{rescaling:lemma}
	Let $c_{v_q}:[0,1]\rightarrow Q$ denote the nonholonomic geodesic with initial velocity $v_q\in M_{q}^{\Gamma_{(g,\D)}}$. {Then, for $t \in [0,1]$, we have that $tv_q \in M_{q}^{\Gamma_{(g,\D)}}$ and}
	\begin{equation*}
	c_{v_q}(t)=\emph{exp}^{nh}_{q}(tv_q).
	\end{equation*}
\end{lemma}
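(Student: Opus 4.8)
The plan is to prove the rescaling lemma by exploiting the affine reparametrization freedom established in the third part of Lemma \ref{homogeneity}, together with the uniqueness of solutions to the nonholonomic geodesic equation \eqref{cnh-1}. The essential idea is that both sides of the claimed identity describe, up to an affine change of parameter, the \emph{same} nonholonomic geodesic, so comparing initial conditions forces them to agree.

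First I would dispose of the trivial case $v_q = 0_q$: by part 2 of Lemma \ref{homogeneity} the geodesic $c_{v_q}$ is the constant curve $q$, and since $0_q$ lies in $M^{\Gamma_{(g,\D)}}_q$ (as noted after the definition of $M^{\Gamma_{(g,\D)}}$) the identity $c_{v_q}(t) = q = \mathrm{exp}^{nh}_q(0_q)$ holds by \eqref{first-property}. So assume $v_q \neq 0$. Fix $t \in [0,1]$ and consider the curve
\[
\gamma(s) = c_{v_q}(ts), \qquad s \in [0,1],
\]
which is the reparametrization of $c_{v_q}$ by the affine map $h(s) = ts$. By part 3 of Lemma \ref{homogeneity}, since $h$ is affine, $\gamma = c_{v_q}\circ h$ is again a nonholonomic geodesic; it is defined on the full interval $[0,1]$ because $c_{v_q}$ is. Its initial point is $\gamma(0) = c_{v_q}(0) = q$ and, by the chain rule, its initial velocity is $\dot\gamma(0) = t\,\dot c_{v_q}(0) = t v_q$.

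Next I would invoke uniqueness. The nonholonomic geodesic with prescribed initial conditions is unique, being an integral curve of the second-order vector field $\Gamma_{(g,\D)}$. Hence $\gamma$ is precisely the nonholonomic geodesic $c_{tv_q}$ with $c_{tv_q}(0) = q$ and $\dot c_{tv_q}(0) = tv_q$. In particular $c_{tv_q}$ is defined on all of $[0,1]$, which is exactly the statement that $tv_q \in M^{\Gamma_{(g,\D)}}_q$. Evaluating at $s=1$ and using the definition \eqref{nh-exponential-map} of the nonholonomic exponential map gives
\[
\mathrm{exp}^{nh}_q(tv_q) = c_{tv_q}(1) = \gamma(1) = c_{v_q}(t),
\]
which is the desired conclusion.

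I do not expect a serious obstacle here, as the argument is a direct transcription of the standard Riemannian homogeneity/rescaling lemma into the nonholonomic setting, where all the needed structural inputs (affine reparametrization invariance, uniqueness of integral curves, and the value $\mathrm{exp}^{nh}_q(v_q) = c_{v_q}(1)$) are already in hand. The one point requiring minor care is the bookkeeping on domains: one must confirm that membership $v_q \in M^{\Gamma_{(g,\D)}}_q$ (i.e. $c_{v_q}$ exists on $[0,1]$) genuinely propagates to $tv_q$ for all $t \in [0,1]$, which is exactly what the reparametrization $\gamma$ delivers, so the flow $\phi^{\Gamma_{(g,\D)}}_s(tv_q)$ remains defined throughout $[0,1]$.
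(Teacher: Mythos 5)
Your proof is correct and follows essentially the same route as the paper's: reparametrize $c_{v_q}$ by the affine map $s\mapsto ts$, apply the third part of Lemma \ref{homogeneity} plus uniqueness of nonholonomic geodesics to identify the result with $c_{tv_q}$, and evaluate at $s=1$. The only cosmetic difference is that you split off the case $v_q=0$ while the paper splits off $t=0$ (and works on the interval $[0,1/t]$ rather than $[0,1]$); both handle the degenerate case where part 3 of Lemma \ref{homogeneity} does not directly apply.
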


\begin{proof}
If $t = 0$ the result is obvious.

Suppose that $t \neq 0$. Then, we can consider the curve
\begin{equation*}
	{s \in \left[0, \frac{1}{t}\right]  \mapsto q(s)=c_{v_q}(ts) \in Q.}
\end{equation*}
Using the last part of Lemma \ref{homogeneity}, we deduce that the previous curve is a nonholonomic geodesic. Moreover, its initial velocity is $tv_q$. Thus,
\[
c_{tv_q}(s) = c_{v_q}(ts).
\]
As a consequence,  $tv_q \in M_{q}^{\Gamma_{(g,\D)}}$ and, in addition,	 by the definition of exponential map, it follows that
	\begin{equation*}
	{\rm exp}^{nh}_{q}(tv_q)=c_{tv_q}(1)=c_{v_q}(t).
	\end{equation*}
\end{proof}

We also have that (see \citep*{AMM} for a more general situation) 
\begin{proposition}\label{Propo1} If $q\in Q$ then, under the canonical linear identification between $T_{0_{q}}M_{q}^{\Gamma_{(g,\D)}}$ and $\D_{q}$, the linear map
\[
T_{0_q}\emph{exp}^{nh}_q: T_{0_{q}}M_{q}^{\Gamma_{(g,\D)}} \simeq \D_q \to T_qQ 
\]
is just the canonical inclusion of $\D_q$ in $T_qQ$. So,
there exists  an open subset $\mathcal{U}_{0}$ of $\D_{q}$ containing $0_{q}$, which we can choose to be starshaped about $0_{q}$, such that the nonholonomic exponential map ${\rm exp}_{q}^{nh}:\mathcal{U}_{0}\rightarrow  Q$ is an  embedding.   
\end{proposition}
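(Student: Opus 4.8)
The plan is to compute the differential $T_{0_q}{\rm exp}^{nh}_q$ by hand using the rescaling lemma, and then to pass from the resulting infinitesimal injectivity to a local embedding via the immersion theorem. First I would fix $w \in \D_q$ and, using the canonical identification (\ref{vertical-lift-v}) at the zero vector, represent the corresponding tangent vector as
\[
w^{\bf v}_{0_q} = \frac{d}{dt}\Big|_{t=0}(0_q + tw) = \frac{d}{dt}\Big|_{t=0}(tw) \in T_{0_q}M_{q}^{\Gamma_{(g,\D)}}.
\]
Since $M_{q}^{\Gamma_{(g,\D)}}$ is open in $\D_q$ and contains $0_q$, the ray $t \mapsto tw$ stays in the domain for small $t$, and by the chain rule
\[
T_{0_q}{\rm exp}^{nh}_q(w^{\bf v}_{0_q}) = \frac{d}{dt}\Big|_{t=0}{\rm exp}^{nh}_q(tw).
\]
By Lemma \ref{rescaling:lemma} the inner expression is $c_w(t)$, the nonholonomic geodesic with $c_w(0)=q$ and $\dot{c}_w(0)=w$; as ${\rm exp}^{nh}_q$ is smooth (being built from the flow of $\Gamma_{(g,\D)}$) and agrees with this geodesic along the ray, its derivative at $t=0$ is $\dot{c}_w(0)=w$. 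Thus, under the identification $T_{0_q}M_{q}^{\Gamma_{(g,\D)}}\simeq \D_q$, the differential $T_{0_q}{\rm exp}^{nh}_q$ maps $w$ to $w\in T_qQ$, i.e.\ it is exactly the canonical inclusion of $\D_q$ in $T_qQ$.

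Next I would observe that this inclusion is injective, so $T_{0_q}{\rm exp}^{nh}_q$ is a linear monomorphism of rank $\dim \D_q = \dim M_{q}^{\Gamma_{(g,\D)}}$; hence ${\rm exp}^{nh}_q$ is an immersion at $0_q$. By the local immersion theorem there is an open neighborhood of $0_q$ on which ${\rm exp}^{nh}_q$ restricts to an embedding onto an embedded submanifold of $Q$. Finally, since any open ball about $0_q$ in $\D_q$ (with respect to an arbitrary norm) is convex and therefore starshaped, I would choose such a ball $\mathcal{U}_0$ contained in this neighborhood; restricting an embedding to an open subset of its domain yields again an embedding, so ${\rm exp}^{nh}_q: \mathcal{U}_0 \to Q$ is the desired starshaped embedding.

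The whole argument is short, and the genuinely substantive ingredient, the value of the differential, follows immediately from Lemma \ref{rescaling:lemma}, so there is no real analytic obstacle there. The only step demanding a little care is the passage from \emph{immersion at the single point} $0_q$ to \emph{embedding on a neighborhood}: one must recall that an immersion is locally a topological embedding and that this property survives restriction to the smaller starshaped domain. A more general version of the statement is established in \citep*{AMM}, which one could cite directly; I nonetheless prefer the self-contained computation above, since both of its inputs, the identification (\ref{vertical-lift-v}) and the rescaling Lemma \ref{rescaling:lemma}, are already at our disposal.
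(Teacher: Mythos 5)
Your proposal is correct and follows essentially the same route as the paper: compute $T_{0_q}{\rm exp}^{nh}_q(w)=\frac{d}{dt}\big|_{t=0}c_w(t)=w$ via the rescaling Lemma \ref{rescaling:lemma}, then pass to a starshaped neighborhood on which the map is an embedding. The only difference is that you spell out the immersion-theorem step that the paper leaves implicit, which is a harmless (indeed welcome) elaboration.
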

\begin{proof}
Observe that for $v_q \in D_q$,
\begin{eqnarray*}
(T_{0_q}{\rm exp}^{nh}_{q})(v_{q})&=& \frac{d}{dt}\Big|_{t=0} {\rm exp}^{nh}_{q}(tv_{q})\\
&=& \frac{d}{dt}\Big|_{t=0}c_{v_q}(t)=v_q
\end{eqnarray*}
using Lemma \ref{rescaling:lemma}.
Therefore, $T_{0_q} {\rm exp}^{nh}_{q}: \D_{q}\rightarrow  T_{q}Q$ is just the canonical inclusion.

Thus, there exists an open subset $\mathcal{U}_{0}$ of $D_{q_0}$, with $0_{q_0} \in \mathcal{U}_{0}$, such that ${\rm exp}^{nh}_{q}: \mathcal{U}_{0} \to Q$ is a diffeomorphism over its image. {In fact, we can choose $\mathcal{U}_0$ to be starshaped about $0_{q}$.}
\end{proof}

\section{Proof of Theorem \ref{main-theorem}}\label{section-proof}

We will prove each one of the items in Theorem \ref{main-theorem}

\begin{proof}$\,$
$i)$ Take in Proposition \ref{Propo1}
\[
{\mathcal M}^{nh}_q = \text{exp}^{nh}_q({\mathcal U}_0).
\] 	
Then, using (\ref{first-property}), Lemma \ref{rescaling:lemma} and Proposition \ref{Propo1}, we deduce the first part of the theorem.	

$ii)$ Let $t \mapsto c_{u_q}(t)$ be a radial kinetic nonholonomic trajectory departing from $q$, that is, $c_{u_q}$ is a nonholonomic trajectory and
\[
c_{u_q}(0) = q, \; \; \dot{c}_{u_q}(0) = u_q \in \D_q.
\]
Then, using that ${\mathcal U}_0$ is an open subset of $\D_q$ and $0_q \in {\mathcal U}_0$, {there exists $v_{q}\in {\mathcal U}_0$ and a real number $\lambda > 0$ such that $v_q = \displaystyle \frac{u_q}{\lambda}$. Also, by item $i)$ of this Theorem}, the radial kinetic nonholonomic trajectory
\[
c_{v_q}: [0, 1] \to Q, \; \; c_{v_q}(t) = \text{exp}^{nh}_q(tv_q)
\]
is contained in ${\mathcal M}^{nh}_q$. As 
\[
\dot{c}_{v_q}(0) = v_q = \frac{u_q}{\lambda} = \frac{\dot{c}_{u_q}(0)}{\lambda},
\] 
from Lemma \ref{homogeneity}, $c_{u_q}$ is just the homothetic reparametrization of $c_{v_q}$ given by
\[
c_{u_q}(t) = c_{v_q}(\lambda t), \; \; \mbox{ for } 0 < t < \frac{1}{\lambda}.
\]
This proves the first part of $ii)$ in the  theorem.

Now, suppose that $g^{nh}_q$ is a Riemannian metric on ${\mathcal M}^{nh}_q$ and that ${\mathcal G}_0$ is the Riemannian metric on ${\mathcal U}_0$ given by ${\mathcal G}_0 = (\text{exp}^{nh}_q)^*(g^{nh}_q)$. From Theorem \ref{key1}, it follows that the lines through $0_q$, which are of the form
\[
t \in [0, 1] \mapsto tv_q, \; \; \mbox{ for } v_q \in {\mathcal U}_0,
\]
are minimizing geodesics in the Riemannian manifold $({\mathcal U}_0, {\mathcal G}_0)$ if and only if ${\mathcal G}_0$ satisfies the Gauss condition. 

On the other hand, from the definition of ${\mathcal G}_0$,  we have that $\text{exp}^{nh}_q: ({\mathcal U}_0, {\mathcal G}_0) \to ({\mathcal M}^{nh}_q, g^{nh}_q)$ is an isometry and, from Lemma \ref{rescaling:lemma}, the image by $\text{exp}^{nh}_q$ of the lines through $0_q$ are just the radial kinetic nonholonomic trajectories departing from $q$. Thus, we conclude that these trajectories are minimizing geodesics in $({\mathcal M}^{nh}_q, g^{nh}_q)$ if and only if ${\mathcal G}_0$ satisfies the Gauss condition in ${\mathcal U}_0$.

$iii)$ As we know, there exist Riemannian metrics on ${\mathcal U}_0$ which satisfy the Gauss condition (see Remark \ref{Existence-metrics-GC}). So, if ${\mathcal G}_0$ is one of them and we define the Riemannian metric $g^{nh}_q$ on ${\mathcal M}^{nh}_q$ to be given by
\[
g^{nh}_q = \left((\text{exp}_q^{nh})^{-1}\right)^*({\mathcal G}_0),
\]
it is clear that, using item $ii)$ in this theorem, the radial kinetic nonholonomic trajectories departing from $q$ are minimizing geodesics in the Riemannian manifold $({\mathcal M}^{nh}_q, g^{nh}_q)$. This proves the first part of item $iii)$.

Now, under the canonical linear identification between $T_q{\mathcal M}^{nh}_q$ and $\D_q$ induced by the linear isomorphism
\[
T_{0_q}\text{exp}^{nh}_q: \D_q \to T_q{\mathcal M}^{nh}_q,
\]
let ${\rm exp}^{g^{nh}_q}_q: T_q{\mathcal M}^{nh}_q \simeq \D_q \to {\mathcal M}^{nh}_q$ be the Riemannian exponential associated  with $g^{nh}_q$ at the point $q$. We may assume, without loss of generality, that 
\[
{\rm exp}^{g^{nh}_q}_q: {\mathcal U}_0 \subseteq \D_q \to {\mathcal M}^{nh}_q
\]
is a diffeomorphism. Moreover, if $v_q \in {\mathcal U}_0$ then, since the radial kinetic nonholonomic trajectory $c_{v_q}: [0, 1] \to {\mathcal M}^{nh}_q$ is a minimizing geodesic in the Riemannian manifold $({\mathcal M}^{nh}_q, g^{nh}_q)$ with initial velocity $v_q$, we deduce that
\[
{\rm exp}^{g^{nh}_q}_q(tv_q) = c_{v_q}(t) = \text{exp}^{nh}_q(tv_q), \; \; \mbox{ for } t \in [0, 1].
\]
So, ${\rm exp}^{g^{nh}_q}_q = \text{exp}^{nh}_q$.
	 \end{proof}

\section{Examples}\label{examples}	
First of all, we will see that if $i_q: {\mathcal M}^{nh}_q \to Q$ is the canonical inclusion then the Riemannian metric
\[
{\mathcal G}_0 = (\text{exp}^{nh}_q)^*(i_q^*g)
\]
on ${\mathcal U}_0 \subseteq \D_q$ doesn't satisfy, in general, Gauss condition. 	 
		\begin{example}\label{no-Gauss-condition}
	Consider the nonholonomic particle in $Q=\R^{3}$, that is, $g$ is the standard flat Riemannian metric on $\R^3$ and $\D$ is the constraint distribution determined by
	\begin{equation*}
	\D=\{v\in TQ \ | \ \dot{z}=y\dot{x}\}.
	\end{equation*}
	Here, $(x, y, z)$ and $(x, y, z,\dot{x},\dot{y},\dot{z})$ are the standard coordinates on $\R^3$ and $T\R^{3}$, respectively.
	
	For simplicity we will fix $q= 0 = (0, 0, 0)$. It is clear that
	\[
	\D_0 = < u_0 = \frac{\partial}{\partial x}_{|0}, v_{0}=\frac{\partial}{\partial y}_{|0} >.
	\]
	Denote by $(u, v)$ the linear coordinates on $\D_0$ induced by the previous basis.
	
	 The nonholonomic exponential map $\text{exp}^{nh}_{0}:\D_{0}\rightarrow Q$ is known to be given by {(see \cite{AMM})}:
	\begin{equation*}
		\text{exp}^{nh}_{0}(u,v)=\left(\frac{u}{v}\arcsinh(v),v,\frac{u}{v}(\sqrt{v^{2}+1}-1)\right)
	\end{equation*}
	if $v\neq 0$ and
	\begin{equation*}
		\text{exp}^{nh}_{0}(u,0)=\left(u,0,0\right), \quad \text{if} \ v=0.
	\end{equation*}
	The tangent map of $\text{exp}^{nh}_{0}$ at $u_{0}$ is represented in coordinates by the matrix
	\begin{equation*}
		T_{u_{0}}\text{exp}^{nh}_{0}= \left(\begin{matrix}
		1 & 0 \\
		0 & 0 \\
		1 & \frac{1}{2}
		\end{matrix}\right).
	\end{equation*}
	Thus, it follows that
	\begin{equation*}
	\begin{split}
		g(\text{exp}^{nh}_{0}(u_{0})) & \left(T_{u_{0}}\text{exp}^{nh}_{0}(u_{0}),T_{u_{0}}\text{exp}^{nh}_{0}((v_{0}))\right) \\
		& =g(\text{exp}^{nh}_{0}(u_{0}))\left( \frac{\partial}{\partial x}+\frac{\partial}{\partial z},\frac{1}{2}\frac{\partial}{\partial z}\right) =\frac{1}{2}.
	\end{split}
	\end{equation*}
	However, we have that
	\begin{equation*}
		g(0)\left(u_{0},v_{0}\right)=0.
	\end{equation*}
	Thus, the Riemannian metric ${\mathcal G}_0 = (\text{exp}^{nh}_0)^*(i_0^*g)$ on $\D_0$ does not satisfy Gauss condition.
\end{example}

Next, for a fixed point $q\in Q$, we will give examples of { Riemannian metrics ${\mathcal G}_0$ satisfying Gauss condition} on $\D_q$ and we will obtain the corresponding metrics
\[
g^{nh}_q = \left((\text{exp}^{nh}_q)^{-1}\right)^*({\mathcal G}_0)
\]
on ${\mathcal M}^{nh}_q$.  

\begin{example}
	Consider again the nonholonomic particle in $Q=\R^{3}$ and fix the point $q = 0 \in \R^3$. 
	
	Let ${\mathcal G}_0$ be the standard flat metric in $\D_0 \simeq \R^{2}$  so that
	\begin{equation*}
	\left(	({\mathcal G}_0)_{ij}(u, v)\right)=\left(\begin{matrix}
		1 & 0 \\
		0 & 1
		\end{matrix}\right).
	\end{equation*}
	It is clear that ${\mathcal G}_0$ satisfies the Gauss condition in $\D_0$. 	
Denote by $(x, y)$ the coordinates on ${\mathcal M}^{nh}_0$ induced by the coordinates $(u, v)$ on $\D_0$ and by the nonholonomic exponential map $\text{exp}^{nh}_0$. Since
	\begin{equation*}
		{\mathcal G}_0 = du\otimes d u + d v \otimes d v,
	\end{equation*}
	and 
	\begin{equation*}
		(\text{exp}^{nh}_0)^{-1}(x,y)= \left( \frac{xy}{\arcsinh(y)},y \right)
	\end{equation*}
	we have that
	\begin{equation*}
		g_{0}^{nh}=E dx \otimes dx + F dx\otimes dy + F dy \otimes dx + G dy \otimes d y,
	\end{equation*}
	with
	\begin{equation*}
		\begin{split}
			& E=\frac{y^2}{\arcsinh^{2}(y)}, \quad F=\frac{xy (\arcsinh(y) \sqrt{y^2+1}-y)}{\sqrt{y^2+1} \arcsinh^{3}(y)} \\
			& G = \frac{-2 \arcsinh(y) \sqrt{y^2+1} x^2 y+ \arcsinh^2(y) (y^2+1) x^2+x^2 y^2}{\arcsinh^4(y) (y^2+1)}+1.
		\end{split}
	\end{equation*}
\end{example}

\begin{example}
	For the same nonholonomic system, one could choose other Riemannian metric satisfying the Gauss condition. Consider the Riemannian metric on $\D_{0}$ given by
	\begin{equation*}
		\mathcal G_{0} =(1-v^{2})du\otimes d u+ uv du\otimes dv + uv dv\otimes du + (1-u^{2})d v \otimes d v.
	\end{equation*}
	Note that, this tensor is degenerate on the unitary circle around the origin of $\D_{0}$ (where the radius is measured with respect to the euclidean metric). To overcome this technicality, we will restrict ourselves to the open ball with unit radius on which the metric $\mathcal G_{0}$ is non-degenerate. This example illustrates that Theorem \ref{main-theorem} could in principle be extended to convex subsets of vector spaces.
	
	The Chrystoffel symbols with respect to this metric are
	\begin{equation*}
		\begin{split}
			& \Gamma^u_{uu} = \frac{2 u v^2}{u^2 + v^2 - 1}, \quad \Gamma^u_{uv} = -\frac{(2u^2 - 1)v}{u^2 + v^2 - 1}, \quad \Gamma^u_{vv} = \frac{2(u^3 - u)}{u^2 + v^2 - 1}, \\
			& \Gamma^v_{uu} = \frac{2 (v^3 - v)}{u^2 + v^2 - 1}, \quad \Gamma^v_{uv} = -\frac{(2uv^2 - u)}{u^2 + v^2 - 1}, \quad	\Gamma^v_{vv} = \frac{2u^2 v}{u^2 + v^2 - 1}.
		\end{split}
	\end{equation*}
	Consider now the lines $c_{(u_{0},v_{0})}:[0,1]\rightarrow \D_{0}$ contained in the unitary open ball in $\D_{0}$ departing from $0$, with the coordinate expression
	\begin{equation*}
		c_{(u_{0},v_{0})}(t)=(u_{0}t,v_{0}t).
	\end{equation*}
	It is easy to check that the curves $c_{(u_{0},v_{0})}$ satisfy the geodesic equations and therefore the lines through the origin are in fact geodesics. At the same time, it is clear that the exponential map is the identity and we can check that the Riemannian metric $\mathcal G_{0}$ satisfies the Gauss condition: let $X,Y\in\D_{0}$ with the local expression $(X^{u},X^{v})$ and $(Y^{u},Y^{v})$, respectively.
	
	Then
	\begin{equation*}
		\begin{split}
			\mathcal G_{0}(X)(X,Y) = & (1-(X^{v})^{2})X^{u}Y^{u}+X^{u}X^{v}(X^{u}Y^{v}+X^{v}Y^{u}) \\
			& +(1-(X^{u})^{2})X^{v}Y^{v} \\
			= & X^{u}Y^{u}+ X^{v}Y^{v} = \mathcal G_{0}(0)(X,Y).
		\end{split}
	\end{equation*}
\end{example}

\begin{example}\label{vertical-rolling-disk}
	Consider the example of the vertical rolling disk with $Q=\R^{2}\times \Es^{1} \times \Es^{1}$, parametrized by the coordinates $(x,y,\theta,\varphi)$. This system is described by the Lagrangian function $L:TQ\rightarrow \R$ given by
	\begin{equation*}
	L(x,y,\varphi,\theta,\dot{x},\dot{y},\dot{\varphi},\dot{\theta})=\frac{1}{2}m(\dot{x}^2+\dot{y}^2)+\frac{1}{2}I\dot{\theta}^2+\frac{1}{2}J\dot{\varphi}^2,
	\end{equation*}
	and subjected to the constraint distribution $\D\subseteq TQ$ determined by the equations
	\begin{equation*}
	\dot{x}=R \cos \varphi \ \dot{\theta}, \quad \dot{y}= R \sin \varphi \ \dot{\theta},
	\end{equation*}
	where $R$ is the radius of the disk, $m$ is the mass of the disk and $I$, $J$ are moments of inertia about an axis perpendicular to the plane of the disk and contained in the plane of the disk, respectively.
	
	In \citep*{BLOCH2009225}, the authors considered the Lagrangian function $L^{mod}:TQ\rightarrow \R$ given by
	\begin{equation*}
		L^{mod}=-\frac{1}{2}m(\dot{x}^2+\dot{y}^2)+\frac{1}{2}I\dot{\theta}^2+\frac{1}{2}J\dot{\varphi}^2 + mR\dot{\theta}(\cos \varphi \dot{x}+\sin \varphi \dot{y})
	\end{equation*}
	and showed that the trajectories of the Euler-Lagrange equations for $L^{mod}$ with initial velocity in the distribution $\D$ are exactly the nonholonomic trajectories for $(L,\D)$.
	
	For simplicity we will assume that both the mass and the radius are unitary $m=R=1$ and we will consider trajectories departing from the point $q$ with coordinates $(0,0,0,0)$. We will show that the metric $g^{mod}$ associated to the kinetic Lagrangian $L^{mod}$ is related to a metric on $\D_{q}$ satisfying the Gauss {condition}.
	
	Indeed, the Lagrangian $L^{mod}$ is of the type
	\begin{equation*}
		L^{mod}(v)=\frac{1}{2}g^{mod}(v,v),
	\end{equation*}
	where
	\begin{equation*}
		(g^{mod})_{ij}=\left(
		\begin{matrix}
			-1 & 0 & \cos \varphi & 0 \\
			0 & -1 & \sin \varphi & 0 \\
			\cos\varphi & \sin\varphi & I & 0 \\
			0 & 0 & 0 & J
		\end{matrix}
		\right).
	\end{equation*}
	Hence, the trajectories of the Euler-Lagrange equations for $L^{mod}$ are just the geodesics with respect to $g^{mod}$. Denote by $\text{exp}^{mod}_{q}:T_{q}Q\rightarrow Q$ the exponential map at $q$ associated to $g^{mod}$ and by $i_{q}:\D_{q}\hookrightarrow T_{q}Q$ the inclusion map. The result obtained in \citep*{BLOCH2009225} may be translated into the equation
	\begin{equation*}
		\text{exp}^{mod}_{q} \circ i_{q}=	\text{exp}^{nh}_{q},
	\end{equation*}
	where $\text{exp}^{nh}_{q}:\D_{q}\rightarrow Q$ is the nonholonomic exponential map. Define now the Riemannian metric on $\D_{q}$ as
	\begin{equation*}
		\mathcal G_{0}=(\text{exp}^{nh}_{q})^{*} g^{mod}.
	\end{equation*}
	The nonholonomic exponential map may be computed to be
	\begin{equation*}
		\text{exp}^{nh}_{q}(u,v)=\left(\frac{u}{v}\sin v, \frac{u}{v}(1-\cos v),u,v\right)
	\end{equation*}
	if $v\neq 0$ and
	\begin{equation*}
	\text{exp}^{nh}_{q}(u,0)=\left(u,0,u,0\right), \quad \text{if} \ v=0.
	\end{equation*}
	Hence, we obtain
	\begin{equation*}
		\mathcal G_{0}=E du \otimes du + F du\otimes dv + F dv \otimes du + G dv \otimes d v,
	\end{equation*}
	with
	\begin{equation*}
	\begin{split}
	& E=\frac{2Iv^2-4+2\sin(v) v+4 \cos(v)}{v^2}, \quad F= \frac{u(v^2+4-3 \sin(v) v-4 \cos(v))}{v^3}\\
	& G = \frac{4 \cos(v) u^2+4 \sin(v) u^2 v-(2 v^2+4) u^2+2 J v^4}{v^4},
	\end{split}
	\end{equation*}
	if $v\neq 0$ and
	\begin{equation*}
		\mathcal G_{0}=I du \otimes du + J dv \otimes d v, \quad \text{if} \ v=0.
	\end{equation*}
	
	This metric is easily seen to satisfy the Gauss condition and moreover, the nonholonomic metric $g^{nh}_{q}$ turns out to be simply the pullback of $g^{mod}$ to the submanifold $\M_{q}^{nh}$!
\end{example}
 
\section{Conclusions and future work}\label{conclusions}	 Given a kinetic nonholonomic system, with configuration space $Q$, we characterize all the Riemannian metrics on the image ${\mathcal M}^{nh}_q$ of the nonholonomic exponential map at a fixed point $q \in Q$ which satisfy the following condition: the minimizing geodesics of these metrics, for sufficiently small times, with starting point $q$ are just the nonholonomic trajectories with the same starting point $q$. We also prove that such metrics on ${\mathcal M}^{nh}_q$ always exist and we illustrate these facts with several examples.

After these results, a lot of work remains to be done. In fact, our idea is to develop a research program in order to discuss the geometric properties of the nonholonomic trajectories for a kinetic nonholonomic system. Some problems which will be covered in this program are the following ones.
\begin{itemize}
\item
{\bf Geodesic flows and the kinetic nonholonomic flow.} Let $(g, \D)$ be a kinetic nonholonomic system with configuration space $Q$, $\Gamma_{(g, \D)}\in \frak{X}(\D)$ the kinetic nonholonomic flow and $\text{exp}^{\Gamma_{(g, \D)}}: M^{\Gamma_{(g, \D)}}\subseteq \D \to Q \times Q$ the nonholonomic exponential map considered in Section \ref{section2} (see (\ref{nh-exponential-map})). Then, one may prove that there exists an open neighborhood ${\mathcal U}_{0(Q)}$ of the zero section $0(Q)$ in $\D$ such that
\[
(\text{exp}^{\Gamma_{(g, \D)}})_{|{\mathcal U}_{0(Q)}}: {\mathcal U}_{0(Q)} \subseteq \D \to Q \times Q
\]
is an embedding. Denote by ${\mathcal M}^{nh} = \text{exp}^{\Gamma_{(g, \D)}}({\mathcal U}_{0(Q)})$. It is clear that ${\mathcal M}^{nh}$ is an embedded submanifold of $Q \times Q$, 
\[
\text{exp}^{nh}: = (\text{exp}^{\Gamma_{(g, \D)}})_{|{\mathcal U}_{0(Q)}}: {\mathcal U}_{0(Q)} \subseteq \D \to {\mathcal M}^{nh} \subseteq Q \times Q
\]
is a diffeomorphism and the following diagram 
\[
\xymatrix{
	{\mathcal U}_{0(Q)} \subseteq \D  \ar[rr]^{\text{exp}^{nh}}\ar[rd]_{(\tau_D)_{|{\mathcal U}_{0(Q)}}}&& {\mathcal M}^{nh} \subseteq Q \times Q\ar[ld]^{(pr_1)_{|{\mathcal M}^{nh}}}\\\
	&Q&
}
\]
is commutative, where $\tau_\D: \D \to Q$ is the vector bundle projection. Note that
\[
(pr_1)_{|{\mathcal M}^{nh}}^{-1}(q) = \text{exp}^{nh}_q({\mathcal U}_{0(Q)} \cap \D_q) \simeq {\mathcal M}^{nh}_q, \; \; \mbox{ for } q \in Q.
\]
Now, denote by $\Gamma^{nh}_{(g, \D)}$ the nonholonomic flow considered as a vector field on ${\mathcal M}^{nh}$, via the diffeomorphism $\text{exp}^{nh}$. Then, proceeding as in this paper, one may find a family of bundle metrics
\[
g^{nh}: V(pr_1)_{|{\mathcal M}^{nh}} \times_{\mathcal{M}^{nh}} V(pr_1)_{|{\mathcal M}^{nh}} \to \R
\]
on the vertical bundle of the fibration $(pr_1)_{|{\mathcal M}^{nh}}: {\mathcal M}^{nh} \to Q$ such that if $q \in Q$, the minimizing geodesics on ${\mathcal M}^{nh}_q \subseteq {\mathcal M}^{nh}$ with starting point $q$ are, for sufficiently small times, the nonholonomic trajectories with the same starting point $q$. 

Moreover, one may consider the geodesic flow $\Gamma^{g^{nh}}$ associated with one of such metrics $g^{nh}$ as a vector field on $V(pr_1)_{|{\mathcal M}^{nh}}$. Then, it would be interesting to discuss the relation between the vector field $\Gamma^{g^{nh}}$ on $V(pr_1)_{|{\mathcal M}^{nh}}$ and the nonholonomic flow $\Gamma^{nh}_{(g, \D)}$ on ${\mathcal M}^{nh}$.
\item
{\bf Nonholonomic Jacobi fields and applications.} Very recently (see \citep*{AMM2}) we introduced, in a natural way, the notion of a nonholonomic Jacobi field along a nonholonomic trajectory $c: I \to Q$ of a kinetic nonholonomic system. In fact, every nonholonomic Jacobi field $Z$ over $c: I \to Q$ is the infinitesimal variation of a uniparametric family of nonholonomic trajectories with initial trajectory $c$. Thus, if all the nonholonomic trajectories of the family have the same starting point $q\in Q$ (that is, $Z$ is zero at the initial point $q$) then, using Theorem \ref{main-theorem}, we deduce that there exists a Riemannian metric on ${\mathcal M}^{nh}_q$ such that the nonholonomic trajectories are geodesic for this metric and $Z$ is a Riemannian Jacobi field along the geodesic $c$. On the other hand, as we know (see, {for instance,} \citep*{docarmo,O'Neill}),  Riemannian Jacobi fields play an important role in the study of the singularities of the Riemannian exponential map and the minimizing properties of the Riemannian geodesics. So, after the the previous comments, one may pose the following question: is it possible, using the nonholonomic Jacobi fields, to discuss the singularities of the nonholonomic exponential map and the minimizing properties of the nonholonomic trajectories as in the case of Riemannian geometry?

\item
{\bf Kinetic {Lagrangianization} of kinetic nonholonomic systems.} Let $(g, \D)$ be a kinetic nonholonomic system with configuration space $Q$. After the results in this paper, another natural question arises: under what conditions can one get a kinetic {Lagrangianization} of the system $(g, \D)$? In other words, under what conditions does there exist a Riemannian metric $g^{nh}$ on $Q$ such that the kinetic nonholonomic trajectories for the system $(g, \D)$ are just the geodesics of the metric $g^{nh}$ with initial velocity in $\D$? {Note that there are examples of kinetic nonholonomic systems admitting such metrics on the whole configuration space and, despite that, the constraint distribution is still not integrable (see \citep*{BLOCH2009225}; see also Example \ref{vertical-rolling-disk} in this paper). On the other hand, the} main result in this paper, Theorem \ref{main-theorem}, may be considered as the first step in order to give an answer to the {previous hard question. We also remark} that if the system admits a kinetic {Lagrangianization} then, using the Legendre transformation associated with the kinetic Lagrangian system induced by the Riemannian metric $g^{nh}$, one may produce a Hamiltonian formulation of the original nonholonomic system. So, our question is related with a classical problem in nonholonomic mechanics, the so-called {Hamiltonization problem.} This problem discusses whether a nonholonomic system admits a Hamiltonian formulation after reduction by symmetries . In this direction, much work has been done in recent years (see, for instance, \citep*{BaGa,EhKoMoRi,FeJo,GaMa,GaMo,Jo,Ko,VeVe}; see also \citep*{BaYa} and the references therein). 

\item
{\bf Levi-Civita connections of Gauss Riemannian metrics associated with a kinetic nonholonomic system.} For a kinetic nonholonomic system $(g, \D)$ with configuration space $Q$, in \cite{Lew} the author describes the set of affine connections on $Q$
\[
\nabla: \frak{X}(Q) \times \frak{X}(Q) \to \frak{X}(Q)
\]
which satisfy the condition
\[
\nabla_XY = {\mathcal P}(\nabla^g_XY), \; \; \mbox{ for } X\in \frak{X}(Q) \mbox{ and } Y \in \Gamma(\D),
\]
where $\Gamma(\D)$ is the set of sections of the distribution $\D$, ${\mathcal P}: \frak{X}(Q) \to \Gamma(\D)$ is the orthogonal projector and $\nabla^g$ is the Levi-Civita connection of $g$. The nonholonomic connection $\nabla^{nh}$ considered in Section \ref{section2} (see (\ref{nhconnection})) is a particular example of such affine connections. In fact, in \cite{Lew}, the author proves that the geodesics of any of these connections with initial velocity in $\D$ are just the trajectories of the kinetic nonholonomic system $(g, \D)$. Another important type of affine connections considered in \cite{Lew}, which are related with the system $(g, \D)$, are the so-called energy-preserving. An affine connection $\nabla$ on $Q$ is energy-preserving for the system $(g, \D)$ if the kinetic energy associated with $g$ is constant along the geodesics of $\nabla$. So, it would be interesting to discuss relations between the previous affine connections and the Levi-Civita connections of the Gauss Riemannian metrics $g^{nh}_q$ on the submanifolds ${\mathcal M}^{nh}_q$, with $q \in Q$. We thank A Lewis for getting our attention to the paper \cite{Lew} and for his comments on a draft version of our paper.

\item
{\bf Geometric integrators for kinetic nonholonomic systems.} 
In \cite{klas}, the authors show that several constructions of geometric integrators for nonholonomic mechanics that appear in the literature do not behave well for general nonholonomic systems. Therefore, the problem of finding structure preserving integrators for nonholonomic systems is completely open. However, observe that Theorem \ref{main-theorem} {gives us} the possibility  of considering a new class of  variational type integrators for nonholonomic mechanics (see \cite{MW2001}). 
 For instance, we can consider a  retraction map {$R: TQ\rightarrow Q \times Q$} on a manifold $Q$ (see \cite{absil})  and define the following  discrete {nonholonomic submanifold of $Q \times Q$:}
{$$
R(\D)={\mathcal  M}^{nh,d}\; .$$}
From the properties of retraction maps we have that if {$q \in Q$,} $(R_q)\big|\D_q$ in a neighbourhood of $0_q$ is a diffeomorphism onto its image {$R(\D_q)={\mathcal  M}_q^{nh,d}$.} 
In a future paper, we will explore the construction of variational type integrators on {${\mathcal  M}^{nh,d}$.} One possibility is first to induce, for all $q\in Q$, a Riemannian metric $g^{nh, d}_q$ on ${\mathcal  M}_q^{nh,d}$ as in Theorem \ref{main-theorem}. This metric can be  induced by a Riemannian metric in $\D_{q}$ verifying Gauss condition and using $\left(R\big|_{\D_q}\right)^{-1}$. 
 Then  we can define a discrete Lagrangian $L^{nh}_d: {\mathcal  M}^{nh,d}\rightarrow {\mathbb R}$  as an approximation of the corresponding action: 
 {\begin{eqnarray*}
 	L^{nh}_d(q, q')&\approx&\frac{1}{2}\int^h_0 g^{nh, d}_{q}(c(t))(\dot{c}(t), \dot{c}(t))\; dt
 	\end{eqnarray*}}
where $c: [0,h]\rightarrow {\mathcal  M}_q^{nh,d}$ is the unique geodesic curve for $g^{nh, d}_q$  satisfying {$c(0)=q$ and $c(h)=q'$.} Since  we naturally  have a discrete exact version {(see \citep*{AMM})} then we  could even  study error analysis and backward error analysis for nonholonomic mechanics (see \citep*{MW2001, Hairer}, {for the case of unconstrained Lagrangian systems}). 

\item {\bf Kinetic nonholonomic systems with affine constraints.} It would also be interesting to formulate the analogous results for the special case of kinetic nonholonomic systems with  affine constraints with a moving energy (see \cite{fasso1,fasso2}). The argument would be very similar since, in this case, there exists a change of coordinates that transforms the system into a nonholonomic system with linear constraints where the moving energy is precisely the energy of the transformed system.

\bigskip

Some of the previous problems {on kinetic nonholonomic systems} may be posed for the more general case of nonholonomic Lagrangian systems of mechanical type. Anyway, the first steps, in this direction, should be aimed at trying to extend the results of this paper for such nonholonomic systems as we will explain in the following item.
\item
{\bf Nonholonomic Lagrangian systems of mechanical type versus unconstrained Lagrangian systems of the same type.} The Lagrangian function $L: TQ \to \mathbb{R}$ of an unconstrained mechanical system is given by
\[
L_{(g, V)}(u_q) = \frac{1}{2} g(q) (u_q, u_q) - V (q), \; \; \mbox{ for } u_q \in T_qQ,
\]
where $g$ is a Riemannian metric on $Q$ and $V: Q \to \R$ is the potential energy. In the presence of a constraint distribution $\D$ on $Q$, we have a nonholonomic Lagrangian system $(L_{(g, V)}, \D)$ of mechanical type. In this paper, we deal with kinetic nonholonomic systems, that is, we assume that the potential forces are not present (in other words, $V = 0$). So, a natural question arises: does there exist an unconstrained Lagrangian system of mechanical type such that the nonholonomic trajectories of the system $(L_{(g, V)}, \D)$ with a fixed starting point $q\in Q$ are the trajectories of the unconstrained Lagrangian system with the same starting point $q$ and initial velocity in $\D$? 

%

\end{itemize}


\section*{Acknowledgments} 
D. Mart{\'\i}n de Diego and A. Simoes  acknowledge financial support from the Spanish Ministry of Science and Innovation, under grants PID2019-106715GB-C21, MTM2016-76702-P, ``Severo Ochoa Programme for Centres of Excellence in R\&D'' (SEV-2015-0554) and from the Spanish National Research Council, through the ``Ayuda extraordinaria a Centros de Excelencia Severo Ochoa'' (20205\-CEX001). A. Simoes is supported by the FCT (Portugal) research fellowship SFRH/BD/129882/2017. J.C. Marrero  acknowledges the partial support by European Union (Feder) grant PGC2018-098265-B-C32.

\bibliography{thesisreferences}{}

\end{document}